\date{\today}
\author{Assal\'e Adj\'e and Jean Goubault-Larrecq}
\title{Concrete Semantics of Programs with Non-Deterministic \\ and Random Inputs}
\def\rr{\mathbb{R}}
\def\qq{\mathbb{Q}}
\def\br{\rr_e}
\def\brp{\overline{\rr}_{+}}
\def\nn{\mathbb{N}}
\def\ff{\mathbb{F}}
\def\bf{\ff_e}
\def\proj{\mathbf{proj}_{\bf}}
\def\inj{\mathbf{inj}}
\def\err{\mathbf{err}}
\def\flmi{\mathbf{F}_{\text{min}}}
\def\flmx{\mathbf{F}_{\text{max}}}
\def\semr#1{\llbracket #1\rrbracket_r}
\def\semf#1{\llbracket #1\rrbracket_f}
\def\seme#1{\llbracket #1\rrbracket_{\star}}
\def\esigma{\Sigma_{\star}}
\def\rhos{\rho_{\star}}
\def\Va{\mathcal{V}}
\def\db#1{\dot{#1}}
\def\lsupexp#1#2{\prescript{#1}{}{#2}}
\def\lfp#1{\operatorname{lfp}\left(#1\right)}
\def\part{\mathbb{P}}
\def\ifp{\operatorname{if}}
\def\whilep{\operatorname{while}}
\def\elsep{\operatorname{else}}
\def\thenp{\operatorname{then}}
\def\skipp{\operatorname{skip}}
\def\weak#1{\operatorname{wp}\seme{#1}}
\def\fun{\operatorname{fun}}
\def\spesem#1{\llbracket #1\rrbracket_c}
\def\choqint#1{\mathcal{C}\!\!\!\!\!\int_{#1}}
\def\re{\rr_e}
\def\fe{\ff_e}
\def\bor#1{\mathcal{B}\left(#1\right)}
\def\mesp#1{\mathcal{M}_{+}\!\!\left(#1\right)}
\def\supi{\operatorname*{\sup\nolimits^\uparrow}}
\def\bigcupi{\operatorname*{\bigcup\nolimits^\uparrow}}
\newcommand\NAN{\mathtt{NaN}}
\newcommand\pinf{\mathtt{inf}}
\newcommand\minf{-\pinf}
\newcommand\intT{\mathtt{int}}
\newcommand\ANS{\mathtt{Ans}}
\newcommand\pow{\mathbb{P}}
\renewcommand\qed{\hfill$\Box$}
\theoremstyle{plain}
\theoremstyle{break}
\newtheorem{convention}{Convention}
\theoremstyle{break}
\newtheorem{proposition}{Proposition}
\theoremstyle{break}
\theoremstyle{break}
\theoremstyle{break}
\newtheorem{lemma}{Lemma}
\theoremstyle{break}
\theoremstyle{break}
\theoremstyle{nonumberplain}
\newtheorem{proof}{Proof}
\theoremstyle{break}
\newtheorem{defi}{Definition}
\theoremstyle{break}
\newtheorem{assum}{Assumption}
\begin{document}
\maketitle

\begin{abstract}
  This document gives semantics to programs written in a C-like
  programming language, featuring interactions with an external
  environment with noisy and imprecise data.


\end{abstract}


\bibliographystyle{alpha}

\section{Introduction}
\label{sec:intro}

The purpose of this report is to define a concrete semantics for a toy
imperative language, meant to incorporate the essential features of
languages such as C, as used in numerical control programs such as
those used in the ANR CPP project.

Some of the distinctive aspects of these programs are: the prominent
use of floating-point operations; and the fact that these programs
read inputs from sensors.  Both these features imply that the values
of numerical program variables are \emph{uncertain}.  Floating-point
operations are vulnerable to round-off errors, which can be modeled as
quantization noise.  Uncertainty is probably more manifest with
sensors, which return values up to some measurement error.  This
measurement error can be described by giving guaranteed bounds (this
is \emph{non-determinism}: any value in the interval can be the actual
value), or by giving a probability distribution (this is
\emph{randomness}: some values are more likely than others), or a
combination of both.  To deal with the latter, more complex
combinations, we rest on variants of two semantic constructions that
were studied by the first author, \emph{previsions} \cite{Gou-csl07}
and \emph{capacities} \cite{JGL:icalp07:cred}.

The main goal of a concrete semantics is to serve as a reference.  In
our case, we wish to be able to prove the validity of associated
abstract semantics and static analysis algorithms, as presented in
other CPP deliverables.  The kind of abstract semantics we are
thinking of was produced, as part of CPP, in \cite{BGGLP-scan10}.
(While it might seem strange that the publication of the abstract
semantics predates the design of the concrete semantics, one might say
that both were developed at roughly the same time, with an eye on each
other.)  So one of our constraints was to ensure that our concrete
semantics should make it easy to justify the abstract semantics we
intend.

Before we start, we should also mention an important point.  Numerical
programs manipulate floating-point values, which are values from a
finite set meant to denote some approximate real values.  It is
customary to think of floating-point values as reals, up
to some error.  This is why we shall define a first semantics, called
the \emph{real semantics}, where variables hold actual reals, and no
round-off is performed at all.  This has well-defined mathematical
contents, but is not what genuine C programs compute.  So we define a
second semantics, the \emph{floating-point semantics}, which is meant
to faithfully denote what C programs compute, but works on
floating-point data, mathematically an extremely awkward concept:
e.g., floating-point addition is not associative, has one absorbing
element ($\NAN$), has no inverse in general (the opposite of infinity
$\pinf$, $\minf$, is not an inverse since the sum of $\pinf$ and $\minf$ is
$\NAN$, not $0$).  But the two semantics are related, through
\emph{quantization}, which is roughly the process of rounding a real
number to the nearest floating-point value.

While the real semantics is much simpler to define than the
floating-point semantics without random choice or non-determinism
(e.g., the semantics of $+$ is merely addition), the situation changes
completely in the presence of random or non-deterministic choice.  Let
us explain this briefly.  The prevision semantics of the style
presented in \cite{Gou-csl07} is based on \emph{continuous} maps and
continuous previsions.  This is perfectly coherent for ordinary,
non-numerical programs (or for numerical programs in the
floating-point semantics, where the type of floating-point numbers is
merely yet another finite data type).  However, this is completely at
odds with the real semantics.  To give a glimpse of the difficulty,
one can define the Heaviside function $\chi_{[0, +\infty)}$ as a
numerical C program with real semantics, say by $\ifp\ x<0\ \thenp\
0.0\ \elsep\ 1.0$, and this is definitely not continuous.  The deep
problem is that, up to some inaccuracies, continuous semantics cannot
describe more than computable operations, but the real semantics must
be \emph{non-computable}: even if we restricted ourselves to
computable reals, testing whether a computable real is equal to $0$ is
undecidable.

There are at least two ways to resolve this conundrum.  The first one
is to cling to the continuous semantics of random choice and
non-determinism of \cite{Gou-csl07} or \cite{JGL:icalp07:cred}, and
work not on reals (or tuples of reals, in $\rr^n$, representing the
list of values of all $n$ program variables), but rather on a
\emph{computational model} of $\rr^n$.  The notion of computational
model of a topological space originates from Lawson \cite{Lawson:Max}.
For example, the dcpo of non-empty closed intervals of reals is a
computational model for $\rr$, and the Heaviside map would naturally
be modeled as the function mapping every negative real to $0$, every
positive real to $1$, and $0$ to the interval $[0, 1]$.  This is
elegant, mathematically well-founded, and would allow us to reuse the
continuous constructions of \cite{Gou-csl07} or
\cite{JGL:icalp07:cred}.  But it falls short of giving an account of
real number computation as operating on \emph{reals}.

We shall explore the second way here: we shall give a real semantics
in terms of \emph{measurable}, not continuous, maps.  This will give
us the required degrees of freedom to define our semantics---e.g., the
Heaviside map \emph{is} measurable---while allowing us to define the
semantics of random, non-deterministic and mixed choice: anticipating
slightly on future sections, this involves generalized forms of
integration, which will be well-defined precisely on measurable maps.
We develop the required theory in sections to come, by analogy with
both the classical Lebesgue theory of integration and the above cited
work on continuous previsions and capacities.

In the presence of random choice only (no non-determinism), our
semantics will be isomorphic to Kozen's semantics of probabilistic
programs \cite{Kozen:probabilistic}, and his clauses for computing
expectations backwards will match our prevision-based semantics.
The semantics we shall describe in the presence of other forms of
choice (non-deterministic, mixed) are new.

\paragraph{Outline.} 
In Section~\ref{sec:syntax}, we introduce the syntax of the programs 
analyzed. In Section~\ref{sec:conversion}, we define the maps which are used 
to pass from floating points to real numbers and vice versa. In 
Section~\ref{sec:concrete}, we define the concrete semantics of expressions 
and tests and prove the measurability of the semantics. In Section~\ref{sec:weakestconcret},
we define our concrete semantics based as a continuation-passing semantics. We also prove 
in Section~\ref{sec:weakestconcret}, the link between our semantics and the theory of previsions.
Finally, in Section~\ref{sec:input}, we treat separately the semantics of the instructions \emph{input}.

\section{Syntax}
\label{sec:syntax}

Let $\Va$ be a countable set of so-called (program) variables.  For
each operation $op$ on real numbers, we reserve the symbol $\db{op}$
for a syntactic operation meant to implement $op$ (in the real
semantics) or some approximation of $op$ (in the floating-point
semantics).  The syntax of a simple imperative language working on
real/floating-point values is given in Figure~\ref{grammaire}.  This
syntax does not include any non-deterministic or probabilistic choice
construct: uncertainty will be in the initial values of the variables,
and will not be created by the program while running.

\begin{figure}[ht!]
\begin{center}
  \[
  \boxed{
    \begin{array}{ccll}
      expr&::=& a&a\in\qq\\      
      &|& x&x\in\Va\\
      &|& \db{-}expr&\\
      &|& expr\db{+}expr&\\
      &|& expr\db{-}expr&\\
      &|& expr\db{\times}expr&\\
      &|& expr\db{/}expr&\\
      &&&\\
      test&::=&expr\db{<=}expr&\\
      &|& expr\db{<}expr&\\
      &|& expr\db{==}expr&\\
      &|& expr\db{!=}expr&\\
      &|& !test&\\
      &&&\\
      inst&::=& \lsupexp{\ell}{skip}&\\
      &|& \lsupexp{\ell}{x =expr}&x\in\Va\\
      &|& \lsupexp{\ell}{\ifp\ test\ \thenp\ \{inst\}\ \elsep\ \{inst\}}&\\
      &|& \lsupexp{\ell}{\whilep\ test\ \{inst\}}&\\ 
      &|& inst\ \mathbf{;}\ inst&\\
    \end{array}
  }
  \]
\caption{Syntax of Programs}
\label{grammaire}
\end{center}
\end{figure}

\section{Conversion between Floating Point and Real Numbers}
\label{sec:conversion}

We shall consider two different semantics in
Section~\ref{sec:concrete}.  The first one implements arithmetic with
floating-point numbers, while the second one relies on actual real
numbers.  Here, we describe the two types and how we convert between
them.

However, one should first be aware of the pitfalls that are hidden in
such a task \cite{Monniaux:pitfalls}.  First and foremost,
floating-point numbers are meant to give approximations to real
numbers, but floating-point computations may give values that are
arbitrarily far from the corresponding real number computation.
Monniaux (op. cit., Section~5) gives the example of the following
program:
\begin{verbatim}
double modulo(double x, double mini, double maxi) {
  double delta = maxi-mini;
  double decl = x-mini;
  double q = decl/delta;
  return x - floor(q)*delta
}

int main() {
  double m = 180.;
  double r = modulo(nextafter(m,0.), -m, m);
}
\end{verbatim}
In a semantics working on real numbers, \verb/modulo/ would return the
unique number $z$ in the interval $[\verb/mini/, \verb/maxi/)$ such
that $\verb/x/-z$ is a multiple of the interval length
$\verb/maxi/-\verb/mini/$.  So, certainly, whatever \verb/nextafter/
actually computes, \verb/r/ should be in the interval $[-180, 180)$.

However, running this using IEEE 754 floating-point arithmetic may
(and usually will) return $-180.0000000000000284$ for $\verb/r/$.
(Here we need to say that \verb/nextafter(m,0.)/ returns the
floating-point that is maximal among those that are strictly smaller
than \verb/m/.  This has no equivalent in the world of real numbers,
and accordingly our language does not include this function.)  This is
only logical:
\begin{itemize}
\item When we enter \verb/modulo/, \verb/x/ is equal to $180 - 2^{-45}$;
\item Then $\verb/maxi/-\verb/mini/$ is computed ($=360$), and
  $\verb/x/-\verb/mini/$ is computed ($=360 - 2^{-45}$); these values
  are then rounded to the nearest floating-point number, and this is
  $360$ in \emph{both} cases;
\item so \verb/delta/, \verb/decl/ are both equal to $360$, \verb/q/
  equals $1$;
\item so \verb/modulo/ returns (the result of rounding applied to)
  $(180 - 2^{-45}) - (1 \times 360) = -180 - 2^{-45} \simeq
  -180.0000000000000284$.
\end{itemize}
Of course, the right result, if computed using real numbers instead of
floating-point numbers, should be $180 - 2^{-45} \simeq
179.9999999999999716$.

This example can be taken as an illustration of the fact that, even
though one can think of each \emph{single} operation (addition,
product, etc.) as being implemented in floating-point computation as
though one first computed the exact, real number result first, and
then rounded it, hence obtaining a best possible approximant, this is
no longer true for whole programs.

Monniaux goes further, and stresses the fact that various choices in
compiler options (e.g., \verb/x87/ vs. IEEE 754 arithmetic), IEEE 754
rounding modes, abusive optimization strategies (e.g., where the
compiler uses the fact that addition is associative, which is wrong in
floating-point arithmetic, see op.\ cit., Section~4.3.2),
processor-dependent optimization strategies (e.g., see op.\ cit.,
Section~3.2, about the use of the multiply-and-add assembler
instruction on PowerPC microprocessors), pragmas (op.\ cit.,
Section~4.3.1), all may result in surprising changes in computed
values.

This causes difficulties in defining sound semantics for
floating-point programs, discussed in op.\ cit., Section~7.3.

But our purpose is not to verify arbitrary numerical programs, and one
can make some simplifying assumptions:
\begin{enumerate}
\item We assume that floating-point arithmetic is performed using the
  IEEE 754 standard on floating-point values of a standard, fixed
  size, typically the 64-bit IEEE 754 (``\verb/double/'') type.  By
  this, we not only mean that the basic primitives are implemented as
  the standard prescribes, but that all floating-point values are
  stored in this format, even when stored in registers.  This is meant
  to avoid the sundry, dreaded problems mentioned by Monniaux with the
  use of \verb/x87/ arithmetic (where registers hold 80-bit
  intermediate values).
\item We assume that the rounding mode is fixed, once and for all for
  all programs.  In particular, calls to functions that change the
  rounding mode on the fly are prohibited.
\item We assume that all optimizations related to floating-point
  computations are turned off.  This is meant to avoid abusive
  (unsound) optimizations (e.g., assuming associativity), and also to
  avoid processor-dependent optimizations (e.g., compiling $a \times x
  + b$ using a single multiply-and-add instruction: this skips the
  intermediate rounding that should have occurred when computing $a
  \times x$, and therefore changes the floating-point semantics).
\item\label{assum:nonextafter} We assume that the only floating-point
  operations allowed are arithmetic operations (i.e., $\db{+}$,
  $\db{-}$, $\db{\times}$, $\db{/}$, but not \verb/nextafter/ for
  example, or the \verb/%f/, \verb/%g/ and related directives of
  \verb/printf/, \verb/scanf/ and relatives; nor casts to and from the
  $\intT$ type---which we shall actually omit).  Library functions
  such as \verb/sin/, \verb/cos/, \verb/exp/, \verb/log/ would be
  allowable in principle, and their semantics would follow the same
  ideas as presented below---provided we make sure that their
  implementations produce results that are correct in the ulp as
  well (i.e., that they are computed as though the exact result was
  computed, then rounded; the \emph{ulp}, a.k.a., the unit in the last
  place, is the least significant bit of the mantissa). 
\end{enumerate}
These assumptions allow us to simplify our semantics considerably.

Let us go on with the actual data types of floating-point, resp.\ real
numbers.  The IEEE 754 standard specifies that, in addition to values
representing real numbers, floating-point values include values
denoting $+\infty$ (which we write $\pinf$), $-\infty$ ($\minf$), and
silent errors ($\NAN$, for ``not a number'').  One can obtain the
first two through arithmetic overflow, e.g., by computing $1.0/0.0$ or
$-1.0/0.0$, and $\NAN$s, e.g., by computing $\pinf - \pinf$.  Under
Assumption~(\ref{assum:nonextafter}) above, there will be no way of
distinguishing any such values through the execution of expressions.
We abstract them all into a unique symbol $\err$ (\textit{error}).

An added benefit of this abstraction is that it dispenses us from
considering the difference between the two zeroes, $+0.0$ and $-0.0$,
of IEEE 754 arithmetic.  These are meant to satisfy $1.0 / \pinf =
+0.0$, $1.0 / \minf = -0.0$, but are otherwise equal, in the sense that
the equality predicate applied to $+0.0$ and $-0.0$ must return true.
Collapsing $\pinf$, $\minf$, and $\NAN$ into just one value $\err$
therefore also allows us to confuse the two zeroes, without harm.
This is important if we stick to our option that single floating-point
operations should computed the exact result then round: rounding the
real number $0$ to the nearest would be a nonsense with two
floating-point numbers representing $0$.

The error $\err$ is absorbing for all standard arithmetic
operations. This means that in our semantic definitions we assume that
the error is propagated during the execution of a program which
contains these special numbers.  Now, we extend by the error symbol
$\err$ the classical sets of floating points and real numbers, which
we denote respectively by $\ff$ and $\rr$.  This yields two new sets:
$\bf=\ff\cup \{\err\}$ and $\br=\rr\cup \{\err\}$.

\begin{convention}
\label{convention1}
Let $r$ be in $\br$. Let $\diamond$ be in $\{+,-,\times,/\}$. Then:
\[
\err\diamond r=r\diamond \err=r/0=-\err=\err
\]
\end{convention}

We consider floating point as special real numbers.  Formally, there
is a canonical injection $\inj$ that lets us to convert a
floating-point value (in $\bf$) into a real number (in $\br$):
\[
\begin{array}{cccc}
\inj:&\bf &\to &\br\\
& f&\mapsto &\inj(f)=\left\{ \begin{array}{cr}\err & \text{ if } f=\err\\
f & \text{ otherwise }
\end{array}\right. 
\end{array}
\]

Conversely, there is a projection map $\proj : \br \to \bf$ that
converts a real number to its rounded, floating-point representation,
as follows.  We let $\flmi$ be the smallest floating point number and
$\flmx$ be the largest.  The $\proj$ map is required to satisfy the
following properties:
\begin{itemize}
\item if $r \notin [\flmi,\flmx]$, then $\proj (r) = \err$;
\item if $r = \inj (f)$ then $\proj (r) = f$.
\end{itemize}
We shall also later require $\proj$ to be measurable (see
Proposition~\ref{expressionsmeasurable}).

This can be achieved for example by the \emph{round-to-nearest}
function, defined by:
\[
\begin{array}{cccc}
\proj:&\br &\to &\bf\\
& r&\mapsto &\proj(r)=\left\{ \begin{array}{ll}\err & \text{if } r\notin [\flmi,\flmx]\\
\operatorname{argmin} \{|f-r|,\ f\in\ff\}& \text{otherwise} \end{array}\right.
\end{array}
\]
When $\operatorname{argmin} \{|f-r|,\ f\in\ff\}$ contains two
elements, the IEEE 754 standard specifies even rounding, i.e., we take
the value $f \in \ff$ whose ulp (last bit of the mantissa) is $0$.

\section{Concrete Semantics of 
  Expressions and Tests}
\label{sec:concrete}

We now construct two concrete semantics, the first one denoted by
$\semr{\cdot}$ on real numbers, the second one denoted by
$\semf{\cdot}$ on floating-point values.  The construction of these
semantics is based on the two maps $\inj$ and $\proj$ defined above.

\subsection{Concrete Semantics of Expressions}

Every expression will be interpreted in an \emph{environment} $\rho$,
which serves to specify the values of variables.  Simply, $\rho$ is a
map from the set $\Va$ of variables to $\br$ (in the real number
semantics) or to $\bf$ (in the floating-point semantics).  We denote
by $\Sigma_f$ the set of floating-point environments, and by
$\Sigma_r$ the set of real number environments.

We start with the semantics in the real model. Let $\rho_r$ be in
$\Sigma_r$.  The concrete semantics $\semr{expr}$ of expressions is
constructed in the  obvious way:

\[
\begin{array}{ccl}
\semr{a}(\rho_r)&=&a\\
\semr{x}(\rho_r)&=&\rho_r(x)\\
\semr{-e}(\rho_r)&=&-\semr{e}(\rho_r)\\
\semr{e_1\db{+}e_2}(\rho_r)&=&\semr{e_1}(\rho_r)+\semr{e_2}(\rho_r)\\
\semr{e_1\db{-}e_2}(\rho_r)&=&\semr{e_1}(\rho_r)-\semr{e_2}(\rho_r)\\
\semr{e_1\db{\times}e_2}(\rho_r)&=&\semr{e_1}(\rho_r)\times \semr{e_2}(\rho_r)\\
\semr{e_1\db{/}e_2}(\rho_r)&=& \semr{e_1}(\rho_r)/\semr{e_2}(\rho_r)
\end{array}
\]
The operations are well-defined by Convention~\ref{convention1}.

Now let us define the floating-point semantics. Let $\rho_f$ be in
$\Sigma_f$.  The floating-point semantics $\semf{expr}$ of expressions
is defined by rounding at the evaluation of each subexpression:

\[
\begin{array}{ccl}
\semf{a}(\rho_f)&=&\proj(a)\\
\semf{x}(\rho_f)&=&\rho_f(x)\\
\semf{-e}(\rho_f)&=&\proj \left(-\inj(\semf{e}(\rho_f)) \right)\\
\semf{e_1\db{+}e_2}(\rho_f)&=&\proj\left( \inj(\semf{e_1}(\rho_f))+\inj(\semf{e_2}(\rho_f)) \right)\\
\semf{e_1\db{-}e_2}(\rho_f)&=&\proj\left( \inj(\semf{e_1}(\rho_f))-\inj(\semf{e_2}(\rho_f)) \right)\\
\semf{e_1\db{\times}e_2}(\rho_f)&=&\proj\left( \inj(\semf{e_1}(\rho_f))\times \inj(\semf{e_2}(\rho_f)) \right)\\
\semf{e_1\db{/}e_2}(\rho_f)&=&\proj\left( \inj(\semf{e_1}(\rho_f))/\inj(\semf{e_2}(\rho_f)) \right)
\end{array}
\]

\subsection{Concrete Semantics of Tests}

The semantics of tests is a bit subtler.  Although one cannot
distinguish $\pinf$, $\minf$, $\NAN$ using expressions only---this
justified, at least partly, our decision to abstract them as a single
value $\err$---one can distinguish them using tests.  Experiments with
a C compiler (gcc 4.2.1 here) indeed show the following behaviors:
\[
\begin{array}{|cc|cccccc|}
  \hline
  a & b & a\verb/==/b & a\verb/!=/b & a\verb/<=/b & a\verb/</ b
  & a\verb/>=/ b & a \verb/>/ b \\
  \hline
  \pinf & \pinf & 1 & 0 & 1 & 0 & 1 & 0 \\
  \pinf & \minf & 0 & 1 & 0 & 0 & 1 & 1 \\
  \NAN & \NAN & 0 & 1 & 0 & 0 & 0 & 0 \\
  \hline
\end{array}
\]
Note for example that an $\NAN$ is not considered equal to itself,
that $a\verb/!=/ b$ is the negation of $a\verb/==/ b$ but $a\verb/>/
b$ is not the negation of $a\verb/<=/ b$ (e.g., when $a=b=\NAN$).

There are two ways we can deal with this phenomenon.  Either we
abandon the confusion of $\pinf$, $\minf$, $\NAN$ as the single value
$\err$, which will allow us to replay the above behavior precisely,
but will incur many complications; or we consider that the semantics
of tests must be \emph{non-deterministic}: not knowing whether $\err$
means $\pinf$, $\minf$, $\NAN$, we are forced to consider that $\err
\verb/==/ \err$ is \emph{any} value in $\{0, 1\}$.

So the semantics of tests will not be a single value, but a \emph{set}
of (Boolean, in $\{0, 1\}$) values.  One may say that our concrete
semantics is therefore slightly of an abstract semantics.  We count on
the fact that $\err$ abstracts (so-called silent) errors, and should
occur rarely in working programs.  (We are not after detecting subtle
errors, but to give reasonable accuracy bounds on actual working
programs.)

On the other hand, we do not need to specify which semantics,
floating-point or real, is meant: both will work in the same way for
tests.  Let us introduce the new notation $\seme{\cdot}$, where
$\star$ is either $f$ (floating-point) or $r$ (real).  We denote by
$\esigma$ the set of environment in this context. Let $\rhos$ be in
$\esigma$.

\[
\begin{array}{rcl}
  \seme{e_1\db{<=}e_2}(\rhos)&=&
  \left\{
    \begin{array}{ll}
      \{1\} &\text{if }\seme{e_1} (\rhos) \neq \err, \seme{e_2}
      (\rhos) \neq \err,
      \text{ and } \seme{e_1}(\rhos)\leq \seme{e_2}(\rhos) \\
      \{0\} &\text{if }\seme{e_1} (\rhos) \neq \err, \seme{e_2}
      (\rhos) \neq \err,
      \text{ and } \seme{e_1}(\rhos) > \seme{e_2}(\rhos) \\
      \{0, 1\} & \text{if } \seme{e_1} (\rhos) =\err \text{ or } \seme{e_2} (\rhos) =\err
    \end{array}
  \right.\\
  \seme{e_1\db{<}e_2}(\rhos)&=&
  \left\{
    \begin{array}{ll}
      \{1\} &\text{if }\seme{e_1} (\rhos) \neq \err, \seme{e_2}
      (\rhos) \neq \err,
      \text{ and } \seme{e_1}(\rhos) < \seme{e_2}(\rhos) \\
      \{0\} &\text{if }\seme{e_1} (\rhos) \neq \err, \seme{e_2}
      (\rhos) \neq \err,
      \text{ and } \seme{e_1}(\rhos) \geq \seme{e_2}(\rhos) \\
      \{0, 1\} & \text{if } \seme{e_1} (\rhos) =\err \text{ or } \seme{e_2} (\rhos) =\err
    \end{array}
  \right.\\
  \seme{e_1\db{==}e_2}(\rhos)&=&
  \left\{
    \begin{array}{ll}
      \{1\} &\text{if }\seme{e_1} (\rhos) \neq \err, \seme{e_2}
      (\rhos) \neq \err,
      \text{ and } \seme{e_1}(\rhos) = \seme{e_2}(\rhos) \\
      \{0\} &\text{if }\seme{e_1} (\rhos) \neq \err, \seme{e_2}
      (\rhos) \neq \err,
      \text{ and } \seme{e_1}(\rhos) \neq \seme{e_2}(\rhos) \\
      \{0, 1\} & \text{if } \seme{e_1} (\rhos) =\err \text{ or } \seme{e_2} (\rhos) =\err
    \end{array}
  \right.\\
  \seme{e_1\db{!=}e_2}(\rhos)&=&
  \left\{
    \begin{array}{ll}
      \{1\} &\text{if }\seme{e_1} (\rhos) \neq \err, \seme{e_2}
      (\rhos) \neq \err,
      \text{ and } \seme{e_1}(\rhos)\neq \seme{e_2}(\rhos) \\
      \{0\} &\text{if }\seme{e_1} (\rhos) \neq \err, \seme{e_2}
      (\rhos) \neq \err,
      \text{ and } \seme{e_1}(\rhos) = \seme{e_2}(\rhos) \\
      \{0, 1\} & \text{if } \seme{e_1} (\rhos) =\err \text{ or } \seme{e_2} (\rhos) =\err
    \end{array}
  \right.\\
  \seme{!t}(\rhos)&=&
  \{1-v \mid v \in \seme{t} (\rhos)\}
\end{array}
\]

The symbols $\leq$, $<$, $\geq$, $>$ in the right-hand sides above are
the usual relations on $\rr$.  So for example, the semantics of $e_1
\db{<=} e_2$ is well-defined because we only ever compare two elements
of $\real$, i.e., two elements of $\br$ other than $\err$.

\subsection{Measurability of Concrete Semantics of Expressions and Tests}

In the definition of the semantics, we will work with Lebesgue
integrals, or notions that generalize the Lebesgue integral.
It is well-known that one cannot posit that every function is
integrable without causing inconsistencies, and we shall therefore
have to check that every function that we integrate is
\emph{measurable}.

Measurability concerns are (mostly) irrelevant in the floating-point
semantics, if we remember that $\ff$, hence $\bf$, is finite, and that
every function between finite spaces is measurable.  But they are
definitely important in the real number semantics.

Measurability is defined relatively to specific $\sigma$-algebras.
The Borel $\sigma$-algebra on $\rr$---or, more generally, on any
topological space---is the smallest $\sigma$-algebra that contains all
open subsets.

We extend the topology of $\rr$ to one on $\re$ by extending the
standard metric on $\rr$ to the following:
\[
d(x,y)=\left\{
  \begin{array}{cr}
    +\infty &\text{ if } x=\err \text{ or } y=\err \text{ and } x\neq y\\
    0 & \text{ if } x=y=\err\\
    |x-y|& \text{ if } x,y\in \rr
  \end{array}
\right.
\]
The resulting topology has, as opens, all open subsets of $\rr$, the
singleton $\{\err\}$, and their unions.  This makes $\err$ an (the
unique) isolated point of $\re$.  Note that this topology is not the
topology of the classical one-point (Alexandroff) compactification of
$\rr$, in which a basis of open neighborhoods of $\err$ would be given
by the sets $(-\infty, a) \cup (b, +\infty) \cup \{\err\}$, and $\err$
would not be isolated.  The latter would also be a possible choice,
but would induce additional, irrelevant complications.

The subspace $\fe$ has the subspace topology: this is just the
discrete topology, since $\fe$ is finite.

We equip $\Sigma_r$ with the smallest topology that makes each map
$\rho \mapsto \rho (x)$ continuous, for each $x \in \Va$.  This makes
$\Sigma_r$ isomorphic to $\re^{\Va}$ with the product topology.

Similarly, we equip $\Sigma_f$ with the subspace topology from
$\Sigma_r$.  This is also the product topology on $\fe^{\Va}$, up to
isomorphism.  Note that this is not the discrete topology as soon as
$\Va$ is infinite: indeed, $\fe^{\Va}$ is compact and infinite in this
case, but all compact discrete topological spaces are finite.
This argument is however uselessly subtle: programs only use finitely
many variables anyway, and for $\Va$ finite, $\Sigma_f$ has the
discrete topology.

We write $\bor{\Sigma_{r}}$ and $\bor{\Sigma_{f}}$ for the
$\sigma$-algebras of Borel subsets of $\Sigma_{r}$ and $\Sigma_{f}$
respectively.  By standard results in topological measure theory (and
crucially using the fact that $\Va$ is countable), these are also the
product $\sigma$-algebras on the (measure-theoretic) product of $\Va$
copies of $\re$, resp.\ $\fe$.  (This is because $\re$, $\fe$ are
Polish spaces, and the Borel $\sigma$-algebra on a countable
topological product of Polish spaces coincides with the
$\sigma$-algebra of the measure-theoretic product of the spaces, each
with their Borel $\sigma$-algebra.)  This is a reassuring statement:
it states that we can harmlessly say ``product'' without having to say
whether this is a topological or measure-theoretic product.  There is
no such trap here.

A measurable map $f : X \to Y$ is one such that $f^{-1} (E)$ is a
Borel subset for every Borel subset $E$; it is equivalent to require
that, for every open subset $U$, $f^{-1} (U)$ is Borel.  In
particular, every continuous map is measurable.  When $Y$ is
second-countable, i.e., has certain so-called \emph{basic} opens such
that every open subset is the union of countably many basic opens,
then $f$ is measurable iff $f^{-1} (U)$ is Borel for every basic open
$U$.  We shall use this in proofs; in particular when $Y = \re$, where
we can take the intervals with rational endpoints, and $\{\err\}$, as
basic opens.

One might think that expressions have continuous real semantics, but
this is wrong: $x/y$ as a function of $x, y \in \re$ is not continuous
at any point of the form $(x, 0)$.  But they are measurable.  This
would be repaired if we had taken the topology of the 1-point
compactification of $\rr$ on $\re$, but we only need measurability.
On the other hand, we really need the topological and
measure-theoretic products to coincide, and while this would also be
true with the 1-point compactification, the argument would be slightly
more complex.

\begin{proposition}[Expressions are Measurable]
  \label{expressionsmeasurable}
  \begin{itemize}
  \item For every expression $e$, $\rho \mapsto \semr{e}(\rho)$ is a
    measurable function from $\Sigma_{r}$ to $\re$.
  \item if $\Va$ is finite or $\proj$ is measurable, then for every
    expression $e$, $\rho \mapsto \semf{e}(\rho)$ is a measurable
    function from $\Sigma_{f}$ to $\fe$.
  \end{itemize}
\end{proposition}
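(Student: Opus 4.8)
The plan is to proceed by structural induction on the expression $e$, treating the real case first and then adapting to the floating-point case. The crux, common to both, is to establish once and for all that the arithmetic operations on $\re$---unary $-$ and the binary $+$, $-$, $\times$, $/$ extended by Convention~\ref{convention1}---are Borel measurable, as maps $\re \to \re$ resp.\ $\re \times \re \to \re$. Granting this, the inductive step in the real case is routine: a map $\rho \mapsto (\semr{e_1}(\rho), \semr{e_2}(\rho))$ into $\re \times \re$ is measurable iff each coordinate is, because $\re$ is Polish and hence $\bor{\re \times \re} = \bor{\re} \otimes \bor{\re}$ (the binary instance of the Polish-product fact already invoked in the text); composing this measurable tupling with the relevant measurable operation then yields a measurable map, and symmetrically for unary $-$.

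Next I would prove the key lemma. Since $\re$ is second-countable, it suffices to check that the preimage of each basic open is Borel, the basic opens being $\{\err\}$ and the intervals $(p,q)$ with $p,q \in \qq$. Consider for instance the extended addition map $\oplus : \re \times \re \to \re$. By Convention~\ref{convention1}, $\oplus^{-1}(\{\err\}) = (\{\err\} \times \re) \cup (\re \times \{\err\})$, which is clopen---hence Borel---because $\{\err\}$ is isolated in $\re$. The preimage $\oplus^{-1}((p,q))$ equals $\{(x,y) \in \rr \times \rr \mid p < x+y < q\}$, an open subset of $\rr \times \rr$ and therefore Borel in $\re \times \re$. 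The same two-line computation handles $-$ and $\times$. Division requires one extra term, since by Convention~\ref{convention1} it also sends $\rr \times \{0\}$ to $\err$: the preimage of $\{\err\}$ then additionally contains $\rr \times \{0\}$, still Borel as $\{0\}$ is closed, while the preimage of $(p,q)$ is an open subset of $\rr \times (\rr \setminus \{0\})$, hence Borel. Unary negation is continuous away from $\err$ and fixes $\err$, so it is measurable by the same clopen split.

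For the base cases, a constant $a$ gives a constant---hence continuous, hence measurable---map, and $\rho \mapsto \rho(x)$ is continuous by the very definition of the topology on $\Sigma_r$; this closes the induction in the real case. For the floating-point case, if $\Va$ is finite then $\Sigma_f = \fe^{\Va}$ is finite with the discrete topology and every map out of it is measurable, so there is nothing to prove. Otherwise I would run the same induction, observing that $\inj : \fe \to \re$ is continuous because $\fe$ is finite and discrete, and that the outer rounding step uses exactly the hypothesis that $\proj$ is measurable; the composite $\proj$ after the relevant operation, after $\inj \times \inj$, after $\langle \semf{e_1}, \semf{e_2} \rangle$, is then measurable as a composition of measurable maps.

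The only genuinely delicate point is the measurability of the raw operations, and specifically of division under the convention $r/0 = \err$; everything hinges on $\{\err\}$ being isolated (clopen) in $\re$, which cleanly separates the ``error'' preimage from the ordinary real-analytic preimage and keeps both Borel. I expect no difficulty beyond this, since the structural induction and the tupling-plus-composition mechanics are standard once the identification $\bor{\re \times \re} = \bor{\re} \otimes \bor{\re}$ is in hand.
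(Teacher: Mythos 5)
Your proposal is correct and follows essentially the same route as the paper's proof: structural induction with constants and coordinate projections as base cases, measurability of the extended operations checked on basic opens (exploiting that $\{\err\}$ is isolated), division singled out as the one non-continuous case, and the floating-point claim reduced to composition with the continuous $\inj$ and the (assumed measurable) $\proj$. The only difference is cosmetic: you make explicit the tupling step via $\bor{\re \times \re} = \bor{\re} \otimes \bor{\re}$, which the paper uses implicitly (having justified the Polish-product identification earlier in the text), and you note directly that $\re \times \{0\}$ is Borel because $\{0\}$ is closed, where the paper writes it as a countable intersection of opens.
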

\begin{proof}
  We proceed by induction on expressions.  Let $a\in\qq$. The function
  $\rho \mapsto\semr{a}(\rho)$ is a constant function and thus it is
  continuous, hence measurable.  Let $x\in\Va$. The function $\rho
  \mapsto\semr{x}(\rho)$ is the coordinate projection on the $x$
  coordinate of $\rho$ and thus it is continuous, hence measurable.
  The case of expressions of the form $\db{-} e$, $e_1 \db{+} e_2$,
  $e_1 \db{-} e_2$, $e_1 \db{\times} e_2$ follows by induction
  hypothesis, using the fact that the corresponding operations on
  $\re$ are continuous.  To show this, it suffices to show that the
  inverse image of every basic open subset (i.e., open intervals of
  $\rr$, and $\{\err\}$) is open in $\Sigma_r$.  For example, the
  inverse image of an open subset of $\rr$ by $+$ is an open subset of
  $\rr \times \rr$, hence of $\re \times \re$, and the inverse image
  of the basic open subset $\{\err\}$ is $(\re \times \{\err\}) \cup
  (\{\err\} \times \re)$, hence open.  The case of $e_1 \db{/} e_2$ is
  slightly different as $/$ is not continuous on $\re \times \re$.
  But it is measurable, as we now show, by showing that the inverse
  image of any basic open subset is Borel.  The inverse image of any
  open interval of $\rr$ is open, since division is continuous at
  every point $(x, y)$ with $y \neq 0$.  And the inverse image of
  $\{\err\}$ by $/$ is the union of $\re \times \{\err\}$, of
  $\{\ell\} \times \re$, and of $\re \times \{0\}$.  The first two are
  open hence Borel, while the last one is the countable intersection
  $\bigcap_{n \geq 1} (\re \times (-\frac 1 n, \frac 1 n))$, hence is
  Borel.

  The second assertion is trivial if $\Va$ is finite, in which case
  all involved $\sigma$-algebras are discrete.  In the general case,
  it suffices to observe that $\inj$ and $\proj$ are measurable:
  $\inj$ is even continuous, since any function from a discrete space
  is, and the fact that $\proj$ is measurable is our assumption Using
  the fact that the composition of measurable functions is measurable,
  and using a similar induction as above, we conclude.  \qed
\end{proof}

All natural rounding functions $\proj$ are measurable, so the
assumptions we are making in Proposition~\ref{expressionsmeasurable}
will be satisfied.  E.g.,
\begin{lemma}
  \label{lemma:rn:mes}
  The round-to-nearest map, with even rounding, is measurable from
  $\re$ to $\fe$.
\end{lemma}
\begin{proof}
  Since the Borel $\sigma$-algebra on $\fe$ is discrete, it is enough
  to check that the inverse image of any single element $f \in \fe$ is
  Borel.

  If $f \in (\flmi, \flmx) \cap \ff$, and if the ulp of $f$ is $0$,
  then this inverse image is $[\frac {f+f'} 2, \frac {f+f''} 2]$
  ($(\frac {f+f'} 2, \frac {f+f''} 2)$ if the ulp of $f$ is not $0$),
  where $f'$ is the largest element of $\ff$ strictly less than $f$
  and $f''$ is the smallest element of $\ff$ strictly larger than $f$.

  If $f = \flmi$, then the inverse image of $f$ is $[\flmi, \frac
  {f+f''} 2]$ (if the ulp of $f$ is $0$; $[\flmi, \frac {f+f''} 2)$ if
  the ulp of $f$ is not $0$), where $f''$ is the smallest element of
  $\ff$ strictly larger than $f$.

  If $f = \flmx$, then the inverse image of $f$ is $[\frac {f+f'} 2,
  \flmx]$ (if the ulp of $f$ is $0$; $(\frac {f+f'} 2, \flmx]$ if the
  ulp of $f$ is not $0$), where $f'$ is the largest element of $\ff$
  strictly less than $f$.

  Finally, the inverse image of $\err$ is the union of $\{\err\}$, of
  $(-\infty, \flmi)$, and of $(\flmx, +\infty)$.

  All these sets are either open, or closed, and in any case Borel.  \qed
\end{proof}

Tests are interpreted as maps from $\esigma$ to $\pow^* \{0, 1\}$,
where $\pow^*$ denotes non-empty powerset, and are thus
multifunctions.  One of the standard notions of measurability for
multifunctions is to say that, given topological spaces $X$ and $Y$,
$f : X \to \pow^* (Y)$ is measurable if and only if $f^{-1} (\Diamond
U)$ is Borel for every \emph{open} subset $U$ of $Y$.  ($\Diamond U$
is the set of subsets that intersect $U$.)  If we understand $f$ as a
relation between elements of $X$ and elements of $Y$, this means that
the elements $x \in X$ that are related to some element of a given
open subset $U$ should be Borel.

\begin{proposition}[Tests are Measurable]
  \label{testsmeasurable}
  For every test $t$, $\rho\mapsto \semr{t}(\rho)$ is a measurable
  function from $\Sigma_r$ to $\pow^* \{0, 1\}$.  If $\Va$ is finite
  or $\proj$ is measurable, then $\rho\mapsto \semf{t}(\rho)$ is a
  measurable function from $\Sigma_f$ to $\pow^* \{0, 1\}$.
\end{proposition}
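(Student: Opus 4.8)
The plan is to unwind the definition of measurability for multifunctions and reduce everything to a statement about ordinary measurable functions into $\re$ (resp.\ $\fe$), then run a structural induction on tests. Recall that $\rho \mapsto \seme{t}(\rho)$, viewed as a map into $\pow^*\{0,1\}$, is measurable iff $f^{-1}(\Diamond U)$ is Borel for every open $U \subseteq \{0,1\}$. Since $\{0,1\}$ is discrete, there are only four opens: $\Diamond\emptyset = \emptyset$ and $\Diamond\{0,1\} = \pow^*\{0,1\}$ give the preimages $\emptyset$ and the whole space, both trivially Borel, so the only content is in $U = \{0\}$ and $U = \{1\}$. As $\Diamond\{0\} = \{S \mid 0 \in S\}$ and $\Diamond\{1\} = \{S \mid 1 \in S\}$, it suffices to prove that for each test $t$ and each $b \in \{0,1\}$ the set $A^t_b = \{\rho \mid b \in \seme{t}(\rho)\}$ is Borel. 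I would carry out the induction on the structure of $t$, treating the real and floating-point cases uniformly, the latter under the hypothesis of Proposition~\ref{expressionsmeasurable} ($\Va$ finite or $\proj$ measurable).

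For the comparison tests $e_1 \db{<=} e_2$, $e_1 \db{<} e_2$, $e_1 \db{==} e_2$, $e_1 \db{!=} e_2$, the first step is to observe that the joint map $\rho \mapsto (\seme{e_1}(\rho), \seme{e_2}(\rho))$ from $\esigma$ into $\re \times \re$ is measurable: each component is measurable by Proposition~\ref{expressionsmeasurable}, and since $\re$ is second-countable, the product Borel $\sigma$-algebra is generated by the coordinate projections, so joint measurability follows from componentwise measurability. It then suffices to check that each region of $\re \times \re$ cut out by the clauses defining $\seme{t}$ is Borel, since $A^t_b$ is its preimage. Here the key observation is that $\rr \times \rr = \re^2 \setminus \bigl((\{\err\} \times \re) \cup (\re \times \{\err\})\bigr)$ is clopen in $\re^2$ (because $\{\err\}$ is isolated), hence Borel; within it, sets such as $\{(x,y) \mid x \le y\}$, $\{(x,y) \mid x < y\}$, $\{(x,y) \mid x = y\}$ are closed in $\rr^2$ and thus Borel in $\re^2$, while the error region $\{x = \err \text{ or } y = \err\}$ is the clopen complement. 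Reading off the definitions, $A^{e_1 \db{<=} e_2}_1$ is the union of $\{x \le y\} \cap (\rr\times\rr)$ with the error region, $A^{e_1 \db{<=} e_2}_0$ is the union of $\{x > y\} \cap (\rr\times\rr)$ with the error region, and analogously for the other three comparisons; all are Borel.

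For the negation $!t$, I would use $\seme{!t}(\rho) = \{1 - v \mid v \in \seme{t}(\rho)\}$, which gives the bijection $0 \leftrightarrow 1$ on membership, so $A^{!t}_0 = A^t_1$ and $A^{!t}_1 = A^t_0$; these are Borel by the induction hypothesis, closing the induction. The floating-point assertion follows by the identical argument, now invoking the second bullet of Proposition~\ref{expressionsmeasurable} for the measurability of $\semf{e_1}$ and $\semf{e_2}$. I expect the only genuinely delicate point to be the verification that the comparison regions are Borel in $\re^2$ together with the joint-measurability step; everything else is the bookkeeping of matching each case of the definition of $\seme{t}$ to a Borel subset.
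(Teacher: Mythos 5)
Your proof is correct, and its scaffolding (reduce measurability to Borel-ness of the preimages of $\Diamond\{0\}$ and $\Diamond\{1\}$, induct on $t$, swap the two sets for $!t'$) coincides with the paper's; but your handling of the comparison cases is genuinely different. The paper never forms the pair map: for $e_1\db{<=}e_2$ it uses the \emph{syntactic difference}, observing that $0\in\seme{e_1\db{<=}e_2}(\rho)$ iff $\seme{e_1\db{-}e_2}(\rho)\in\{\err\}\cup(0,+\infty)$ (an open set) and $1\in\seme{e_1\db{<=}e_2}(\rho)$ iff $\seme{e_1\db{-}e_2}(\rho)\in\{\err\}\cup(-\infty,0]$ (a closed set), then applies Proposition~\ref{expressionsmeasurable} to the single measurable map $\seme{e_1\db{-}e_2}$, treating the other comparisons ``similarly''. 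This keeps the whole argument in the one-dimensional codomain $\re$ and requires no product-$\sigma$-algebra facts, which is its main advantage. Its cost is that the characterization rests on error propagation and on the sign of the semantic difference faithfully encoding the comparison: that is exact in the real semantics, but in the floating-point semantics $\semf{e_1\db{-}e_2}$ is the \emph{rounded} difference, and for instance $\proj\left(\inj(\flmx)-\inj(\flmi)\right)=\err$ although both operands are ordinary floats, so the paper's ``contains $1$'' equivalence is not literally true there (the proposition itself is unharmed, but the proof as written is loose on this point). Your route --- componentwise measurability gives measurability of $\rho\mapsto(\seme{e_1}(\rho),\seme{e_2}(\rho))$ into $\re\times\re$ because $\re$ is second-countable, and the clauses of the test semantics cut out Borel regions since $\rr\times\rr$ is clopen in $\re\times\re$ --- costs you the standard fact (already invoked by the paper for $\Sigma_r$ itself) that the Borel $\sigma$-algebra of a countable product of second-countable spaces is the product of the Borel $\sigma$-algebras; in exchange it compares the two values directly, is immune to the rounding/overflow subtlety, and treats the real and floating-point cases uniformly.
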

\begin{proof}
  It suffices to show that the inverse image of $\Diamond \{0\}$ and
  of $\Diamond \{1\}$ are Borel.  We proceed by induction on $t$.  Let
  $\star$ be either $f$ or $r$.

  If $t$ is of the form $e_1 \db{<=} e_2$, then $\seme{t} (\rhos)$
  contains $0$ if and only if $\seme{e_1 \db{-} e_2} (\rhos)$ is in
  $\{\err\} \cup (0, +\infty)$ (if $\star=r$; in $\inj^{-1} (\{\err\}
  \cup (0, +\infty))$ if $\star=f$).  The latter is open, and
  $\seme{e_1 \db{-} e_2}$ is measurable by
  Proposition~\ref{expressionsmeasurable}, so $\seme{t}^{-1} (\Diamond
  \{0\})$ is Borel.  Similarly, $\seme{t} (\rho)$ contains $1$ if and
  only if $\seme{e_1 \db{-} e_2} (\rho)$ is in $\{\err\} \cup
  (-\infty, 0]$ (if $\star=r$; its inverse image by $\inj$ if
  $\star=f$), which is closed, so $\seme{t}^{-1} (\Diamond \{1\})$ is
  Borel.  We proceed similarly if $t$ is of the form $e_1 \db{<} e_2$,
  $e_1 \db{==} e_2$, or $e_1 \db{!=} e_2$.

  FInally, if $t$ is of the form $!t'$, $\seme{t}^{-1} (\Diamond
  \{0\}) = \seme{t'}^{-1} (\Diamond \{1\})$, and $\seme{t}^{-1}
  (\Diamond \{1\}) = \seme{t'}^{-1} (\Diamond \{0\})$, which allows us
  to conclude immediately.  \qed
\end{proof}

\section{Weakest Preconditions and Continuation-Passing Style Semantics}
\label{sec:weakestconcret}

The idea of a continuation-passing style (CPS) semantics is that the
value $v$ returned by a given program is not given explicitly.
Rather, one passes a continuation parameter $\kappa$ to the semantics,
and the latter is defined so that it eventually calls $\kappa$ on the
final value $v$.

While this seems like a complicated and roundabout way of defining
semantics, this is very useful.  For example, this allows one to give
semantics to exceptions, or to various forms of non-determinism and
probabilistic choice \cite{Gou-csl07}.

The continuation $\kappa$ itself is a map from the domain of values to
some, usually unspecified domain of \emph{answers} $\ANS$.  (In
\cite{Gou-csl07}, $\ANS$ was required to be $\rr^+$.)

Also, the ``final value'' of a program should here be understood as
the final environment $\rhos$ that represents the state the program is
in on termination.  So a continuation $\kappa$ will be a map from
$\esigma$ to $\ANS$.

It should also be noted that continuation-passing style semantics are
nothing else than a natural generalization of Dijkstra's \emph{weakest
  preconditions}, or the computation of sets of predecessor states in
transition systems.  This is obtained by taking $\ANS = \{0, 1\}$.
Then the continuations $\kappa$ are merely the indicator maps of
subsets $E$ of environments (predicates $P$ on environments), and the
continuation-passing style denotation of program $\pi$ in continuation
$\kappa$ is merely the (continuation representing) the set of
environments $\rho$ such that evaluating $\pi$ starting from $\rho$
may terminate with an environment in $E$ (satisfying $P$).

Recall that an $\omega$-cpo is a poset in which every ascending
sequence $x_0 \leq x_1 \leq \ldots \leq x_n \leq \ldots$ has a
supremum (a least upper bound).
\begin{assum} 
  \label{assum:cpo}
  We assume that $\ANS$ is an $\omega$-cpo with a smallest element
  $\bot_{\ANS}$, and binary suprema.
\end{assum}
We write $\sup$ for suprema, and reserve $\sup^\uparrow$ for suprema
of ascending sequences.  Assumption~\ref{assum:cpo} can be stated
equivalently as: $\ANS$ has all countable suprema (including the
supremum $\bot_{\ANS}$ of the empty family).  If the language had been
deterministic (we fall short of this because of the way $\err$ is
dealt with in tests), we would only need $\ANS$ to be an $\omega$-cpo,
and would not have a need to binary suprema.

The typical example of such a set $\ANS$ of answers is $\rr_+ \cup
\{+\infty\}$, with its usual ordering.

As usual, we define the semantics of instructions by recursion on
syntax:
\begin{itemize}
\item $\skipp$:
  \[
  \weak{\lsupexp{\ell_1}{\skipp},\ell_2}(\kappa)=\kappa
  \]

\item assignment:
  \[
  \weak{\lsupexp{\ell_1}{x:=e},\ell_2}(\kappa)=\fun\ \rho\mapsto \kappa(\rho[x\to \seme{e}(\rho)])
  \]

\item sequence:
  \[
  \weak{\lsupexp{\ell_1}{P};\lsupexp{\ell_2}{Q},\ell_3}(\kappa)=
  \weak{\lsupexp{\ell_1}{P},\ell_2}\left( \weak{\lsupexp{\ell_2}{Q},\ell_3}(\kappa)\right)
  \]

\item tests:
  \[
  \begin{array}{l}
    \weak{\lsupexp{\ell}{\ifp}\ t\ \thenp \lsupexp{\ell_1}{P_1}\ \elsep \lsupexp{\ell_0}{P_0},\ell_2}(\kappa)=\\
    \\
    \fun\ \rho\mapsto
    \sup_{i \in \seme{t}(\rho)} \left(\weak{\lsupexp{\ell_i}{P_i},\ell_2}(\kappa)\right)(\rho)
  \end{array}
  \]
  In other words,
  \[
  \begin{array}{l}
    \weak{\lsupexp{\ell}{\ifp}\ t\ \thenp \lsupexp{\ell_1}{P_1}\ \elsep \lsupexp{\ell_0}{P_0},\ell_2}(\kappa)=\\
    \\
    \fun\ \rho\mapsto
    \left\{
      \begin{array}{lr}
        \left(\weak{\lsupexp{\ell_1}{P_1},\ell_2}(\kappa)\right)(\rho)&
        \text{if } \seme{t}(\rho)=\{1\}\\
        \left(\weak{\lsupexp{\ell_0}{P_0},\ell_2}(\kappa)\right)(\rho)&
        \text{if } \seme{t}(\rho)=\{0\}\\
        \sup\left(
          \left(\weak{\lsupexp{\ell_1}{P_1},\ell_2}(\kappa)\right)(\rho),
          \left(\weak{\lsupexp{\ell_0}{P_0},\ell_2}(\kappa)\right)(\rho)
        \right)
        & \text{if } \seme{t}(\rho) = \{0, 1\}
      \end{array}
    \right.
  \end{array}
  \]
\end{itemize}

The definition of the semantics of a loop, of the form
$\lsupexp{\ell_1}{\whilep}\ t\ \lsupexp{\ell_2}{P}$ uses an auxilary
map. We denote by $\mathbf{F}(\esigma,\ANS)$ the set
$\left(\left(\esigma\to \ANS\right)\to \left(\esigma\to
    \ANS\right)\right)$ i.e. the set of maps from $\left(\esigma\to \ANS\right)$ to itself. We equip $\left(\esigma\to \ANS\right)$ with the pointwise 
    ordering. The set $\mathbf{F}(\esigma,\ANS)$ is also equipped with
the pointwise ordering: $f\leq g$ iff for every
$\kappa\in\left(\esigma\to \ANS\right)$, for every $\rho\in\esigma$,
$f(\kappa)(\rho)\leq g(\kappa)(\rho)$ in $\ANS$.  For every countable
family $\left(f_{i}\right)_{i\in I}$ of elements of
$\mathbf{F}(\esigma,\ANS)$, its supremum $\sup_{i\in I} f_{i}$ is then
also computed pointwise:
\[
\sup_{i\in I} f_{i}:\kappa\mapsto \left(\fun \rho\mapsto \sup_{i\in I} \left( f_{i}(\kappa)(\rho) \right)\right)\enspace.
\]

From this latter definition, we get the following lemma.

\begin{lemma}
The set $\mathbf{F}(\esigma,\ANS)$ is a $\omega$-cpo with binary suprema,
and with a smallest element $\bot_{\mathbf{F}(\esigma,\ANS)}$ defined
as:
\[
\bot_{\mathbf{F}(\esigma,\ANS)}(\kappa)=\fun \rho \mapsto\bot_{\ANS},\ \forall\, \kappa:\esigma\mapsto \ANS\enspace .
\]
\end{lemma}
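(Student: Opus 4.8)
The plan is to reduce everything to a single general fact about pointwise orderings and then apply it twice. The general fact is: if $D$ is a poset that has a least element, has binary suprema, and is an $\omega$-cpo, then for any set $X$ the function space $X \to D$, ordered pointwise (that is, $u \leq v$ iff $u(x) \leq v(x)$ in $D$ for all $x \in X$), again enjoys all three properties, and moreover the least element, the binary suprema, and the suprema of ascending sequences are all computed pointwise. By Assumption~\ref{assum:cpo}, $\ANS$ is exactly such a $D$. First I would apply the general fact with $D = \ANS$ and $X = \esigma$, concluding that $\esigma \to \ANS$ is an $\omega$-cpo with binary suprema and least element $\fun\,\rho \mapsto \bot_{\ANS}$. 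Then I would apply it a second time with $D = (\esigma \to \ANS)$ --- which is legitimate precisely because the first application established that $\esigma \to \ANS$ has the required properties --- and $X = (\esigma \to \ANS)$, yielding that $\mathbf{F}(\esigma,\ANS) = ((\esigma \to \ANS) \to (\esigma \to \ANS))$ is an $\omega$-cpo with binary suprema and a least element. The two-level pointwise order this induces, namely $f \leq g$ iff $f(\kappa)(\rho) \leq g(\kappa)(\rho)$ for all $\kappa$ and all $\rho$, is exactly the ordering fixed in the paper.

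The heart of the argument is therefore to establish the general fact, and there the only thing requiring proof is that a family of functions whose codomain has the relevant suprema has its supremum computed pointwise. Concretely, given an ascending sequence $u_0 \leq u_1 \leq \cdots$ in $X \to D$, I would observe that for each fixed $x \in X$ the sequence $u_0(x) \leq u_1(x) \leq \cdots$ is ascending in $D$ (immediate from the definition of the pointwise order), so it has a supremum $g(x) := \sup^\uparrow_n u_n(x)$ in $D$; this defines a function $g \in X \to D$. That $g$ is an upper bound is clear since $u_n(x) \leq g(x)$ for every $x$, and that it is the \emph{least} upper bound follows because any upper bound $h$ satisfies $u_n(x) \leq h(x)$ for all $n$ and $x$, whence $g(x) = \sup^\uparrow_n u_n(x) \leq h(x)$ for all $x$, i.e.\ $g \leq h$. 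The cases of binary suprema (use $\sup(u,v)(x) = \sup(u(x),v(x))$) and of the least element (use $\bot(x) = \bot_D$) are entirely analogous, and reflexivity, antisymmetry and transitivity of the pointwise order are inherited coordinatewise from $D$.

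Finally I would check that the formulas obtained by composing the two applications agree with those stated in the paper. Tracing the least element through both levels gives $\bot_{\mathbf{F}(\esigma,\ANS)}(\kappa) = (\fun\,\rho \mapsto \bot_{\ANS})$ for every $\kappa$, which is exactly the displayed formula; and tracing a supremum through both levels reproduces the double-pointwise formula $\sup_{i \in I} f_i : \kappa \mapsto (\fun\,\rho \mapsto \sup_{i \in I} f_i(\kappa)(\rho))$ recorded just above the lemma. I do not expect any genuine obstacle here: the argument is routine order theory, and no countability, topology or measurability enters, since the statement concerns the full function space and the only suprema invoked exist by Assumption~\ref{assum:cpo}. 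The one point that must be stated carefully is that the general fact is applied twice, with the second application relying on the output of the first rather than on $\ANS$ directly.
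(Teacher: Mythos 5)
Your proof is correct, and it is essentially the argument the paper has in mind: the paper offers no explicit proof, asserting the lemma follows directly from the pointwise definitions of the order and of suprema on $\mathbf{F}(\esigma,\ANS)$, which is precisely the pointwise-lifting argument you spell out. Your organization of it as a general fact about function spaces $X \to D$ applied twice (first to $\esigma \to \ANS$, then to $(\esigma\to\ANS)\to(\esigma\to\ANS)$) is a clean way to make rigorous what the paper treats as immediate, and the formulas you trace through agree with those displayed in the paper.
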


\begin{itemize}
\item loops.  Given a test $t$ and an instruction
  $\lsupexp{\ell_2}{P}, \ell_1$, let $H_{t, \lsupexp{\ell_2}{P},
    \ell_1}$ be the map from $\mathbf{F}(\esigma,\ANS)$ to
  $\mathbf{F}(\esigma,\ANS)$ defined as follows:
  \[
  \left(H_{t, \lsupexp{\ell_2}{P}, \ell_1}
    (\varphi)\right)(\kappa)(\rho)= \left\{
    \begin{array}{ll}
      \left(\weak{\lsupexp{\ell_2}{P},\ell_1}\left(\varphi(\kappa)\right)\right)(\rho)
      & \text{if } \seme{t}(\rho)=\{1\}\\
      \kappa(\rho)& \text{if } \seme{t}(\rho)=\{0\}\\
      \sup
      (\left(\weak{\lsupexp{\ell_2}{P},\ell_1}\left(\varphi(\kappa)\right)\right)(\rho),
      \kappa(\rho))
      & \text{if } \seme{t}(\rho) = \{0, 1\}
    \end{array}
  \right.
  \]
  So, for example, $\weak{\lsupexp{\ell}{\ifp}\ t\ \thenp
    \lsupexp{\ell_1}{P_1}\ \elsep \lsupexp{\ell_0}{P_0},\ell_2}=H_{t,
    \lsupexp{\ell_1}{P_1}, \ell_2} (\weak{\lsupexp{\ell_0}{P_0},
    \ell_2})$.

  The semantics of the loop $\lsupexp{\ell_1}{\whilep}\ t\
  \lsupexp{\ell_2}{P}, \ell_3$ is the supremum of the sequence
  $\bot_{\mathbf{F}(\esigma,\ANS)}$, $H_{t, \lsupexp{\ell_2}{P},
    \ell_3} (\bot_{\mathbf{F}(\esigma,\ANS)})$, $H_{t,
    \lsupexp{\ell_2}{P}, \ell_3} (H_{t, \lsupexp{\ell_2}{P}, \ell_3}
  (\bot_{\mathbf{F}(\esigma,\ANS)}))$, \ldots in
  $\mathbf{F}(\esigma,\ANS)$, namely:
  \[
  \weak{\lsupexp{\ell_1}{\whilep}\ t \lsupexp{\ell_2}{P},\ell_3}=
  \sup_{n \in \nn} H_{t, \lsupexp{\ell_2}{P}, \ell_3}^n
  (\bot_{\mathbf{F}(\esigma,\ANS)})
  \]
\end{itemize}
A more standard definition would have been to let
$\weak{\lsupexp{\ell_1}{\whilep}\ t \lsupexp{\ell_2}{P},\ell_3}$ be
defined as the least fixpoint of $H_{t, \lsupexp{\ell_2}{P}, \ell_3}$
in $\mathbf{F}(\esigma,\ANS)$.  We show below that this would be
equivalent.  The reason is that the map $H_{t, \lsupexp{\ell_2}{P},
  \ell_3}$ is $\omega$-Scott-continuous, i.e., is monotone and
preserves suprema of ascending sequences.

We prove this through two lemmas.  The first one shows that $H_{t, \lsupexp{\ell_2}{P}, \ell_1}$ is
$\omega$-Scott-continuous when the maps
$\kappa\mapsto\weak{\lsupexp{\ell_2}{P},\ell_{1}}(\kappa)$ are
$\omega$-Scott-continuous. This second lemma says that the maps
$\kappa\mapsto\weak{\lsupexp{\ell_2}{P},\ell_{1}}(\kappa)$ are
actually $\omega$-Scott-continuous.

\begin{lemma}
  \label{wpprop}
  Let $\lsupexp{\ell_{2}}{P}$ be an instruction. Let $t$ be a test.
  Assume that the map
  \begin{equation}
    \label{wpScott}
    \kappa\mapsto\weak{\lsupexp{\ell_2}{P},\ell_{1}}(\kappa) \text{ is $\omega$-Scott-continuous}\enspace ,
  \end{equation}
  then:
  \begin{itemize}
  \item The map $H_{t, \lsupexp{\ell_2}{P}, \ell_1}$ is $\omega$-Scott-continuous.
  \item The map $\sup_{n \in \nn} H_{t, \lsupexp{\ell_2}{P},
      \ell_1}^n$ is $\omega$-Scott-continuous.
  \item $\kappa\mapsto\weak{\lsupexp{\ell_1}{\whilep}\ t
      \lsupexp{\ell_2}{P},\ell_3}(\kappa)$ is
    $\omega$-Scott-continuous.
  \end{itemize}
\end{lemma}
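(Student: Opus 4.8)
The statement mixes two distinct notions of $\omega$-Scott-continuity, and my plan is to keep them rigorously apart throughout. For a map $T : \mathbf{F}(\esigma,\ANS) \to \mathbf{F}(\esigma,\ANS)$ (the first two items, with $T = H_{t,\lsupexp{\ell_2}{P},\ell_1}$, respectively $T = \sup_{n\in\nn} H_{t,\lsupexp{\ell_2}{P},\ell_1}^n$), continuity will mean preservation of suprema of ascending sequences in the argument $\varphi$; for the element $\weak{\lsupexp{\ell_1}{\whilep}\ t\ \lsupexp{\ell_2}{P},\ell_3}$ of $\mathbf{F}(\esigma,\ANS)$ (the third item), it will mean that the map $\kappa \mapsto \weak{\lsupexp{\ell_1}{\whilep}\ t\ \lsupexp{\ell_2}{P},\ell_3}(\kappa)$ preserves suprema of ascending sequences of continuations. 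Monotonicity is routine and checked branch by branch, so I will only detail preservation of suprema. The one tool I will use repeatedly is that $\ANS$ has all countable suprema (Assumption~\ref{assum:cpo}), so that, pointwise, so do $\esigma\to\ANS$ and $\mathbf{F}(\esigma,\ANS)$; this lets me interchange two countable suprema, and use $\sup(\supi_k a_k, b) = \supi_k \sup(a_k,b)$ and $\sup(\supi_k a_k, \supi_k b_k) = \supi_k \sup(a_k,b_k)$ for ascending sequences $(a_k)_k$, $(b_k)_k$.

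Abbreviate $H = H_{t,\lsupexp{\ell_2}{P},\ell_1}$. For the first item I would fix an ascending sequence $(\varphi_k)_k$ in $\mathbf{F}(\esigma,\ANS)$, a continuation $\kappa$, and an environment $\rho$, and compute $H(\supi_k \varphi_k)(\kappa)(\rho)$ by cases on $\seme{t}(\rho)$. Since suprema in $\mathbf{F}(\esigma,\ANS)$ are pointwise, $(\supi_k \varphi_k)(\kappa) = \supi_k \varphi_k(\kappa)$ is the supremum of an ascending sequence of continuations. In the branch $\seme{t}(\rho)=\{1\}$ the value is $\weak{\lsupexp{\ell_2}{P},\ell_1}(\supi_k \varphi_k(\kappa))(\rho)$, and hypothesis~\eqref{wpScott} (continuity of $\weak{\lsupexp{\ell_2}{P},\ell_1}$ in the continuation) pushes the supremum outside, giving $\supi_k H(\varphi_k)(\kappa)(\rho)$. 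The branch $\seme{t}(\rho)=\{0\}$ returns the constant $\kappa(\rho)$ on both sides. The branch $\seme{t}(\rho)=\{0,1\}$ combines the previous computation with $\sup(\supi_k a_k, b)=\supi_k \sup(a_k,b)$ for $b=\kappa(\rho)$. This establishes the first item.

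For the second item I would note that each iterate $H^n$ is continuous in $\varphi$: $H^0 = \Id$ trivially, and a composition of maps continuous in $\varphi$ is again such, by the first item. Since the sequence $(H^n)_n$ of maps need not be ascending, I treat $\sup_{n\in\nn} H^n$ as a genuine countable supremum, and for an ascending sequence $(\varphi_k)_k$ I interchange the two countable suprema: $(\sup_{n\in\nn} H^n)(\supi_k \varphi_k) = \sup_{n\in\nn} \supi_k H^n(\varphi_k) = \supi_k \sup_{n\in\nn} H^n(\varphi_k) = \supi_k (\sup_{n\in\nn} H^n)(\varphi_k)$, which is the second item.

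For the third item I would use $\weak{\lsupexp{\ell_1}{\whilep}\ t\ \lsupexp{\ell_2}{P},\ell_3} = \sup_{n\in\nn} H^n(\bot_{\mathbf{F}(\esigma,\ANS)})$ and show, by induction on $n$, that every $H^n(\bot_{\mathbf{F}(\esigma,\ANS)})$ is continuous in the continuation. The base case is the constant map $\bot_{\mathbf{F}(\esigma,\ANS)}$, which is trivially so. The induction step rests on an auxiliary fact I must prove separately, namely that $H$ sends continuation-continuous elements to continuation-continuous elements: fixing an ascending sequence of continuations $(\kappa_m)_m$ and an environment $\rho$, I repeat the branchwise computation of the first item, but now pushing $\supi_m$ through the continuation; the $\{1\}$-branch first uses that $\varphi$ is continuation-continuous (so $\varphi(\supi_m \kappa_m)=\supi_m \varphi(\kappa_m)$) and then hypothesis~\eqref{wpScott}, the $\{0\}$-branch is immediate as suprema in $\esigma\to\ANS$ are pointwise, and the $\{0,1\}$-branch uses $\sup(\supi_m a_m,\supi_m b_m)=\supi_m \sup(a_m,b_m)$. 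Once every $H^n(\bot_{\mathbf{F}(\esigma,\ANS)})$ is continuation-continuous, a final interchange of $\sup_{n\in\nn}$ with $\supi_m$ delivers the third item. I expect the principal obstacle to be exactly this bookkeeping: the computation behind the first item (continuity of $H$ in $\varphi$) does not formally yield the auxiliary fact behind the third (that $H$ preserves continuity in $\kappa$), even though the two are step-for-step parallel and each invokes~\eqref{wpScott} at its $\{1\}$-branch; both must be carried out. A secondary point, easy to miss, is that the $\{0,1\}$-branches and the sup-interchanges genuinely require all countable suprema rather than merely suprema of ascending sequences, which is why the binary suprema of Assumption~\ref{assum:cpo} are indispensable.
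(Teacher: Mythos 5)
Your proof is correct, and for the first two items it is essentially the paper's own argument: the same three-way case split on $\seme{t}(\rho)$, invoking hypothesis~\eqref{wpScott} in the $\{1\}$-branch and the identity $\sup(\supi_k a_k,b)=\supi_k\sup(a_k,b)$ in the $\{0,1\}$-branch, followed, for the second item, by closure of $\omega$-Scott-continuous maps under composition and countable pointwise suprema. The genuine divergence is in the third item, and there your route is the sound one. Writing $H$ for $H_{t,\lsupexp{\ell_2}{P},\ell_1}$, the paper claims the third item ``follows trivially from the second one, using the fact that application (of maps to $\bot_{\mathbf{F}(\esigma,\ANS)}$) is $\omega$-Scott-continuous''; but this conflates the two notions of continuity that you are careful to keep apart. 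The second assertion is continuity of $\varphi\mapsto\sup_{n\in\nn}H^n(\varphi)$ \emph{in the argument $\varphi$}, whereas the third is continuity of the single element $\sup_{n\in\nn}H^n(\bot_{\mathbf{F}(\esigma,\ANS)})$ of $\mathbf{F}(\esigma,\ANS)$ \emph{in the continuation $\kappa$}; since $\mathbf{F}(\esigma,\ANS)$ consists of \emph{all} maps on $\esigma\to\ANS$, not just continuous ones, the former does not imply the latter (a constant map $\Theta:\varphi\mapsto D$, with $D$ monotone but not $\omega$-continuous, is continuous in $\varphi$, yet $\Theta(\bot_{\mathbf{F}(\esigma,\ANS)})=D$ fails continuity in $\kappa$). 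Your auxiliary fact---that $H$ carries continuation-continuous elements of $\mathbf{F}(\esigma,\ANS)$ to continuation-continuous ones, proved by the same branchwise computation, then an induction on $n$ starting from $\bot_{\mathbf{F}(\esigma,\ANS)}$, and a final interchange of countable suprema---is exactly the missing step, and it is what the loop case of Lemma~\ref{weakScott} actually rests on. So your proof does not merely take a different route: it repairs a real gap in the paper's argument, at the cost of one extra computation that is step-for-step parallel to the first item but not deducible from it. Your closing remark is also accurate: since $(H^n(\varphi))_n$ need not be ascending for arbitrary $\varphi$, the suprema $\sup_{n\in\nn}$ are genuine countable suprema, so the binary suprema granted by Assumption~\ref{assum:cpo} (equivalently, all countable suprema) are indispensable both there and in the $\{0,1\}$-branches.
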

\begin{proof}
  Let us prove the first assertion.  Let
  $\varphi,\varphi'\in\mathbf{F}(\esigma,\ANS)$ such that $\varphi\leq
  \varphi'$. For every $\kappa:\esigma\mapsto \ANS$, for every $\rho$
  such that $\seme{t}(\rho)=\{1\}$,
  \[
  \weak{\lsupexp{\ell_2}{P},\ell_1}(\varphi(\kappa))(\rho)\leq \weak{\lsupexp{\ell_2}{P},\ell_1}(\varphi'(\kappa))(\rho)\enspace,
  \]
  so:
  \[
  H_{t,
    \lsupexp{\ell_2}{P}, \ell_1}(\varphi)(\kappa)(\rho)\leq H_{t,
    \lsupexp{\ell_2}{P}, \ell_1}(\varphi')(\kappa)(\rho)\enspace.
  \]
  Let $\left(\varphi_{n}\right)_{n\in\nn}$ be an ascending sequence in
  $\mathbf{F}\left(\esigma,\ANS\right)$. We also have:
  \[
  \left(\weak{\lsupexp{\ell_2}{P},\ell_1}\left(\supi_{n\in\nn}\varphi_{n}(\kappa)\right)\right)(\rho)=
  \supi_{n\in\nn}\left(\weak{\lsupexp{\ell_2}{P},\ell_1}\left(\varphi_{n}(\kappa)\right)\right)(\rho)\enspace. 
  \]
  When $\seme{t}(\rho)=\{1\}$, this is equivalent to:
  \[
  H_{t,
    \lsupexp{\ell_2}{P}, \ell_1}\left(\supi_{n\in\nn}\varphi_{n}\right)(\kappa)(\rho)=\left(\supi_{n\in\nn}H_{t,
      \lsupexp{\ell_2}{P}, \ell_1}(\varphi_{n})\right)(\kappa)(\rho)\enspace.
  \]
  When $\seme{t}(\rho)=\{0\}$, we obtain the same equality, where now
  both sides are the constant $\kappa(\rho)$.  When
  $\seme{t}(\rho)=\{0, 1\}$,
  \begin{eqnarray*}
    H_{t,
      \lsupexp{\ell_2}{P},
      \ell_1}\left(\supi_{n\in\nn}\varphi_{n}\right)(\kappa)(\rho)
    & = &
    \sup
    \left(\supi_{n\in\nn}\left(\weak{\lsupexp{\ell_2}{P},\ell_1}\left(\varphi_{n}(\kappa)\right)\right)(\rho),
      \kappa (\rho)
    \right)\\
    & = & 
    \supi_{n\in\nn}\left(
      \sup
      \left(\weak{\lsupexp{\ell_2}{P},\ell_1}\left(\varphi_{n}(\kappa)\right)(\rho),
        \kappa (\rho)
      \right)
    \right)\\
    & = & \left(\supi_{n\in\nn}H_{t,
      \lsupexp{\ell_2}{P}, \ell_1}(\varphi_{n})\right)(\kappa)(\rho)\enspace.
  \end{eqnarray*}

  The second assertion follows from the first, from the fact that
  compositions of $\omega$-Scott-continuous maps are again
  $\omega$-Scott-continuous (hence $H_{t, \lsupexp{\ell_2}{P},
    \ell_1}^n$ is $\omega$-Scott-continuous for every $n \in \nn$),
  and that suprema of $\omega$-Scott-continuous are
  $\omega$-Scott-continuous.

  The last assertion follows trivially from the second one, using the
  fact that application (of maps to $\bot_{\mathbf{F}(\esigma,\ANS)}$)
  is $\omega$-Scott-continuous.  \qed
\end{proof}

Next, we prove that for every instruction $\lsupexp{\ell}{P}$, for
every label $\ell'$, the map
$\kappa\mapsto\weak{\lsupexp{\ell}{P},\ell'}(\kappa)$ is
$\omega$-Scott-continuous.

\begin{lemma}[$\omega$-Scott-Continuity of $\weak{\cdot}$]
  \label{weakScott}
  For every instruction $\lsupexp{\ell}{P}$, for every label $\ell'$,
  the map $\kappa\mapsto\weak{\lsupexp{\ell}{P},\ell'}(\kappa)$ is
  $\omega$-Scott-continuous.
\end{lemma}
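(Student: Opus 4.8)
The plan is to prove this by structural induction on the instruction $\lsupexp{\ell}{P}$, establishing $\omega$-Scott-continuity of $\kappa\mapsto\weak{\lsupexp{\ell}{P},\ell'}(\kappa)$ for each syntactic form. Recall that $\omega$-Scott-continuity means monotone and preserving suprema of ascending sequences; since suprema in $\mathbf{F}(\esigma,\ANS)$ are computed pointwise, at each step I reduce to checking continuity at a fixed $\rho\in\esigma$ in the target $\omega$-cpo $\ANS$.

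The base cases are $\skipp$ and assignment. For $\skipp$, $\weak{\lsupexp{\ell_1}{\skipp},\ell_2}(\kappa)=\kappa$ is the identity map, which is trivially $\omega$-Scott-continuous. For assignment, $\weak{\lsupexp{\ell_1}{x:=e},\ell_2}(\kappa)=\fun\ \rho\mapsto\kappa(\rho[x\to\seme{e}(\rho)])$ is precomposition of $\kappa$ with a fixed reindexing of environments, so given an ascending sequence $(\kappa_n)_{n\in\nn}$ the value at $\rho$ is $\supi_{n}\kappa_n(\rho[x\to\seme{e}(\rho)])$, and continuity follows directly from the pointwise computation of suprema and the fact that $\sup^\uparrow$ in $\ANS$ is taken coordinatewise. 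For the sequence case, $\weak{\lsupexp{\ell_1}{P};\lsupexp{\ell_2}{Q},\ell_3}(\kappa)=\weak{\lsupexp{\ell_1}{P},\ell_2}\left(\weak{\lsupexp{\ell_2}{Q},\ell_3}(\kappa)\right)$ is a composition of two maps that are $\omega$-Scott-continuous by the induction hypothesis, and compositions of $\omega$-Scott-continuous maps are again $\omega$-Scott-continuous.

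For the test case $\weak{\lsupexp{\ell}{\ifp}\ t\ \thenp\lsupexp{\ell_1}{P_1}\ \elsep\lsupexp{\ell_0}{P_0},\ell_2}$, I fix $\rho$ and split on the three possible values of $\seme{t}(\rho)$. When $\seme{t}(\rho)=\{1\}$ or $\{0\}$ the output at $\rho$ is just one of the two branch continuations applied at $\rho$, continuous by the induction hypothesis. In the remaining case $\seme{t}(\rho)=\{0,1\}$ the output is $\sup$ of the two branch values at $\rho$; here I use the standard fact that binary $\sup$ commutes with suprema of ascending sequences in an $\omega$-cpo with binary suprema, exactly as in the $\{0,1\}$ computation already carried out in the proof of Lemma~\ref{wpprop}. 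The loop case is the crucial one: $\weak{\lsupexp{\ell_1}{\whilep}\ t\lsupexp{\ell_2}{P},\ell_3}=\sup_{n\in\nn}H_{t,\lsupexp{\ell_2}{P},\ell_3}^n(\bot_{\mathbf{F}(\esigma,\ANS)})$. By the induction hypothesis, hypothesis~(\ref{wpScott}) of Lemma~\ref{wpprop} holds for the body $\lsupexp{\ell_2}{P}$, so that lemma applies verbatim and its third conclusion gives exactly the $\omega$-Scott-continuity of $\kappa\mapsto\weak{\lsupexp{\ell_1}{\whilep}\ t\lsupexp{\ell_2}{P},\ell_3}(\kappa)$.

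The main obstacle is not any single calculation but getting the induction structure exactly right: Lemma~\ref{wpprop} is stated conditionally on hypothesis~(\ref{wpScott}), so I must be careful to invoke the induction hypothesis for the loop body \emph{before} appealing to Lemma~\ref{wpprop}, rather than circularly. The potential subtlety is that the loop body $\lsupexp{\ell_2}{P}$ may itself contain nested loops, but since it is a structurally smaller instruction the induction hypothesis covers it, discharging~(\ref{wpScott}) and letting the three bullets of Lemma~\ref{wpprop} close the loop case cleanly.
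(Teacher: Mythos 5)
Your proof is correct and follows essentially the same route as the paper's: structural induction on instructions, with the identity argument for $\skipp$, pointwise suprema for assignment, composition of $\omega$-Scott-continuous maps for sequences, a three-way case split on $\seme{t}(\rho)$ for tests, and an appeal to Lemma~\ref{wpprop} for loops with hypothesis~(\ref{wpScott}) discharged by the induction hypothesis on the loop body. Your explicit remark about invoking the induction hypothesis \emph{before} Lemma~\ref{wpprop} to avoid circularity is exactly the (implicit) structure of the paper's argument.
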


\begin{proof}
  We proceed by induction on the instructions. 

  $\bullet$ $\skipp$. The instruction $\skipp$ is the identity map
  from $\esigma\to \ANS$ to itself, so it is
  $\omega$-Scott-continuous.

  $\bullet$ Assignment. Let $\kappa,\kappa'$ be maps from $\esigma$ to
  $\ANS$ such that $\kappa\leq \kappa'$. For every $\rho\in\esigma$,
  $\kappa(\rho[x \to\seme{e}(\rho)])\leq \kappa'(\rho[x
  \to\seme{e}(\rho)])$.  So $\weak{\lsupexp{\ell_1}{x:=e},\ell_2}$ is
  a monotonic map. Now, we consider an ascending sequence
  $(\kappa_n)_{n\in\nn}$ of maps from $\esigma$ to $\ANS$. We have:
  \[
  \left(\supi_{n\in\nn}(\kappa_n)\right)(\rho[x \to\seme{e}(\rho)])
  =\supi_{n\in\nn}\left(\kappa_n(\rho[x \to\seme{e}(\rho)])\right)
  \]
  and then:
  \[
  \weak{\lsupexp{\ell_1}{x:=e},\ell_2}\left(\supi_{n\in\nn}\kappa_n\right)=\supi_{n\in\nn} 
  \weak{\lsupexp{\ell_1}{x:=e},\ell_2}(\kappa_n)
  \enspace.
  \]

  $\bullet$ Sequence.  By induction hypothesis on
  $\lsupexp{\ell_{1}}{P}$ and $\lsupexp{\ell_{2}}{Q}$, the maps
  $\kappa \to \weak{\lsupexp{\ell_{1}}{P},\ell_{2}}(\kappa)$ and
  $\kappa \to \weak{\lsupexp{\ell_{2}}{Q},\ell_{3}}(\kappa)$ are
  $\omega$-Scott-continuous.  Since the composition of two
  $\omega$-Scott-continuous maps is $\omega$-Scott-continuous then the
  sequence $\kappa \to
  \weak{\lsupexp{\ell_{1}}{P};\lsupexp{\ell_{2}}{Q},\ell_{3}}(\kappa)$
  is also $\omega$-Scott-continuous.

  $\bullet$ Tests.  By induction hypothesis on $\lsupexp{\ell_{2}}{P}$
  and $\lsupexp{\ell_{3}}{Q}$, the maps $\kappa \to
  \weak{\lsupexp{\ell_{2}}{P},\ell_{4}}(\kappa)$ and $\kappa \to
  \weak{\lsupexp{\ell_{3}}{Q},\ell_{4}}(\kappa)$ are
  $\omega$-Scott-continuous.  Since we consider a pointwise order, it
  suffices to show that for every $\rho\in\esigma$,
  $\kappa\mapsto\weak{\lsupexp{\ell_1}{\ifp}\ t\ \thenp
    \lsupexp{\ell_2}{P}\ \elsep
    \lsupexp{\ell_3}{Q},\ell_4}(\kappa)(\rho)$ is
  $\omega$-Scott-continuous (from $\esigma\mapsto \ANS$ to
  $\ANS$). When we fix $\rho\in \esigma$, we get three cases whether the
  test is true, false or true and false.  In each case, we conclude that
  $\weak{\lsupexp{\ell_1}{\ifp}\ t\ \thenp \lsupexp{\ell_2}{P}\ \elsep
    \lsupexp{\ell_3}{Q},\ell_4}(\kappa)$ is $\omega$-Scott-continuous
  by induction hypothesis.

  $\bullet$ Loops.  By induction hypothesis, $\kappa \to
  \weak{\lsupexp{\ell_{2}}{P},\ell_{1}}(\kappa)$ is
  $\omega$-Scott-continuous. Lemma~\ref{wpprop} immediately entails
  that $\kappa\mapsto \weak{\lsupexp{\ell_1}{\whilep}\ t
    \lsupexp{\ell_2}{P},\ell_3}(\kappa)$ is
  $\omega$-Scott-continuous.  \qed
\end{proof}

We introduce a parametric version of classical previsions. Since we work 
with $\omega$-cpos, we have to consider $[0,+\infty]$, we add arithmetics 
conventions to deal with $+\infty$. 

\begin{convention}[Arithmetics in $\rr_+\cup\{+\infty\}$]
We add the following rules:
\begin{itemize}
\item $0\times (+\infty)=(+\infty)\times 0=0$;
\item $+\infty\times +\infty=+\infty$;
\item For all $x\in [0,+\infty]$, $x+(+\infty)=(+\infty)+x=+\infty$.
\end{itemize}
\end{convention}

Let $X$ be a topological space. We equip $X$ with its Borel $\sigma$-algebra.
We denote by $\mesp{X}$ the set of positive measurable functions on $X$.

\begin{defi}[Parametric prevision]
\label{def:parametric}
Let $X$ be a non-empty set. Let $F$ be a map from $\mesp{X}$ to itself.
The map $F$ is said to be a parametric prevision if:
\begin{enumerate}
\item $F$ is positively homogeneous;
\item $F$ is monotonic;
\end{enumerate}
Moreover, a parametric prevision is said to be:
\begin{enumerate}
\item (lower) $F(f+g)\geq F(f)+F(g)$, for all functions $f,g\in\mesp{X}$;
\item (upper) $F(f+g)\leq F(f)+F(g)$, for all functions $f,g\in\mesp{X}$;
\item (linear) $F(f+g)=F(f)+F(g)$, for all functions $f,g$ from $\mesp{X}$;
\item ($\omega$-continuous) for all ascending family $(f_n)_{n\in \nn}$, $F(\supi_{n\in \nn} f_n)=\supi_{n\in\nn} F(f_n)$.
\end{enumerate}
\end{defi}

We recall that the set of positive measurable functions is a convex cone stable by countable infima and suprema 
and pointwise limit. The set of positive measurable functions contains constant (positive), and continuous functions.

\begin{proposition}
\label{compo}
\begin{enumerate}
\item The set of upper $\omega$-continuous parametric prevision is $\omega$-cpo (equipped with the pointwise ordering) 
with a smallest element (the null pfunctional $\underline{0}$ associates 
at $f\in\mesp{X}$ the positive measurable function $g:x\to 0$)
\item The set of upper $\omega$-continuous parametric prevision is stable by binary suprema. 
\item The set of upper $\omega$-continuous parametric prevision is stable by composition.
\end{enumerate}
\end{proposition}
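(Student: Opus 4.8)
The plan is to treat all three items uniformly. In each case the candidate object---a supremum of a chain, a binary supremum, or a composite---is given by an explicit pointwise formula in $[0,+\infty]$, and I verify that it again satisfies the four defining clauses (positive homogeneity, monotonicity, the upper inequality, and $\omega$-continuity) and that it lands in $\mesp{X}$. For the order-theoretic items (1) and (2) I additionally check that the construction yields the \emph{least} upper bound in the pointwise order. Throughout I rely on the recalled facts that $\mesp{X}$ is a convex cone stable under binary maxima and countable (in particular ascending) suprema, and on the complete-lattice structure of $[0,+\infty]$, in which suprema commute.

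For (1), given an ascending chain $(F_n)_{n\in\nn}$ I set $F(f) := \supi_{n\in\nn} F_n(f)$, the supremum taken pointwise on $X$; this lies in $\mesp{X}$ since an ascending sup of positive measurable functions is positive measurable. Positive homogeneity and monotonicity pass to the sup immediately, and the upper inequality follows from $\supi_n (a_n+b_n) = \supi_n a_n + \supi_n b_n$ for ascending sequences in $[0,+\infty]$ together with the fact that each $F_n$ is upper. The one clause deserving care is $\omega$-continuity: for an ascending family $(g_m)_m$ I compute $F(\supi_m g_m) = \supi_n \supi_m F_n(g_m) = \supi_m \supi_n F_n(g_m) = \supi_m F(g_m)$, where the inner equalities use the $\omega$-continuity of each $F_n$ and the commutation of the two suprema in $[0,+\infty]$. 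Finally $\underline{0}$ is an upper $\omega$-continuous parametric prevision (every clause being trivial), and it is the smallest element because any such $F$ sends each $f$ to a positive function, so $\underline{0} \leq F$.

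For (2), I take $H := F \vee G$ defined by $H(f) := F(f) \vee G(f)$ (pointwise maximum on $X$), which is in $\mesp{X}$ as a maximum of two measurable functions. Positive homogeneity uses $\lambda(a\vee b)=\lambda a \vee \lambda b$ for $\lambda\geq 0$ (including $\lambda=0$ under the convention $0\times(+\infty)=0$), monotonicity is clear, and $\omega$-continuity uses $(\supi_n a_n)\vee(\supi_n b_n)=\supi_n (a_n\vee b_n)$ for ascending sequences. The only slightly delicate clause is again the upper inequality, which rests on the elementary bound $(a+b)\vee(c+d)\leq(a\vee c)+(b\vee d)$ in $[0,+\infty]$; applying it with $a=F(f)$, $b=F(g)$, $c=G(f)$, $d=G(g)$ and using that $F,G$ are upper gives $H(f+g)\leq H(f)+H(g)$. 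That $H$ is the least upper bound in the sub-poset is immediate from the pointwise definition, since any upper bound $K$ in the set satisfies $K(f)\geq F(f)\vee G(f)=H(f)$.

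For (3), I show $F\circ G$ is again an upper $\omega$-continuous parametric prevision; it is well-defined as a map $\mesp{X}\to\mesp{X}$ because each factor is. Positive homogeneity and monotonicity are preserved by composing maps that enjoy them. For the upper clause I chain $(F\circ G)(f+g) = F(G(f+g)) \leq F(G(f)+G(g)) \leq F(G(f))+F(G(g))$, where the first inequality uses that $G$ is upper together with the monotonicity of $F$, and the second uses that $F$ is upper. For $\omega$-continuity I use that $G$ is monotone, so $(G(f_n))_n$ is ascending, whence $(F\circ G)(\supi_n f_n)=F(\supi_n G(f_n))=\supi_n F(G(f_n))$ by the $\omega$-continuity of $G$ and then of $F$. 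I expect the only genuine subtlety across the three parts to be this interaction between the upper inequality and monotonicity in (3)---one must invoke monotonicity of the outer map before its upper inequality---together with the bookkeeping of the $+\infty$ arithmetic conventions (notably positive homogeneity when $\lambda=0$), rather than any deep argument.
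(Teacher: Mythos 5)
Your proof is correct and follows essentially the same route as the paper's: pointwise suprema for items (1) and (2), the key bound $(a+b)\vee(c+d)\leq(a\vee c)+(b\vee d)$ for the upper clause of binary suprema, and the chain ($G$ upper, then $F$ monotone, then $F$ upper) for the composite, with $\omega$-continuity handled by commuting suprema throughout. The only cosmetic differences are that you additionally verify leastness of the constructed suprema and explicitly note that $(G(f_n))_n$ is ascending before invoking $\omega$-continuity of $F$, points the paper leaves implicit.
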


\begin{proof}
\begin{enumerate}
\item The parametric $\underline{0}$ is clearly an upper $\omega$-continuous parametric prevision. Since 
the null parametric is the smallest element of $\mathbf{F}(X,[0,+\infty])$, it is also 
the smallest element of the set of upper $\omega$-continuous parametric previsions.

Let $\left(F_n\right)_{n\in\nn}$ be an ascending sequence. Since $\mesp{X}$ is stable by countable suprema, then
$(\supi_{n\in \nn} F_n) (f)=\supi_{n\in \nn} (F_n (f))\in\mesp{X}$ for all $f\in\mesp{X}$.

Let $\alpha\geq 0$ and $f\in\mesp{X}$. The set $\mesp{X}$ is a cone thus $\alpha f\in\mesp{X}$. 
Since $F_n$ are positively homogeneous then $(\supi_{n\in \nn} F_n) (\alpha f)=\supi_{n\in \nn} 
(F_n(\alpha f))=\supi_{n\in \nn}\alpha F_n(f)$ and since $\alpha\geq 0$, we conclude that 
$\supi_{n\in \nn} (F_n(\alpha f))=\alpha \supi_{n\in \nn} F_n(f)$ and $\supi_{n\in \nn} F_n$ 
is positively homogeneous. 

Let $f,g\in\mesp{X}$ such that $f\leq g$, $(\supi_{n\in \nn} F_n)(f)= \supi_{n\in \nn} F_n(f)$.
For all $n\in\nn$, $F_n(f)\leq F_n(g)$ and we get $\supi_{n\in \nn} F_n(f)\leq \supi_{n\in \nn} F_n(g)
=(\supi_{n\in \nn} F_n)(g)$ and we conclude that $\supi_{n\in \nn} F_n$ is monotonic. 

Let $f,g\in\mesp{X}$. For all $n\in\nn$, we have
$F_n(f+g)\leq F_n(f)+F_n(g)$, taking the suprema we get $\supi_{n\in \nn} F_n(f+g)\leq 
\supi_{n\in \nn}(F_n(f)+F_n(g))\leq \supi_{n\in \nn}F_n(f)+\supi_{n\in \nn}F_n(g)$ and $\supi_{n\in \nn}F_n$ 
is an upper parametric prevision. 

Now let $(f_k)_{k\in \nn}$ be an ascending sequence of elements of $\mesp{X}$. The set $\mesp{X}$ 
is stable by suprema hence $\supi_{k\in \nn} f_k\in\mesp{X}$. We have $\supi_{n\in \nn} F_n(\supi_{k\in \nn} 
f_k)=\supi_{n\in \nn} \supi_{k\in \nn}F_n(f_k)$ and the suprema commute and then $\supi_{n\in \nn} F_n(\supi_{k\in \nn}
f_k)=\supi_{k\in \nn} \supi_{n\in \nn}F_n(f_k)$. We conclude that $\supi_{n\in \nn} F_n$ is $\omega$-continuous.

\item Let $F,G$ be two upper $\omega$-continuous parametric prevision. From the supremum stability property,
$\sup(F,G)(f)=\sup(F(f),G(f))$ belongs to $\mesp{X}$ for all $f\in\mesp{X}$.

Let $\alpha\geq 0$ and $f\in\mesp{X}$ ($\alpha f\in\mesp{X}$), since $F,G$ are positively homogeneous then 
$\sup(F,G) (\alpha f)=\sup(F(\alpha f),G(\alpha f))=\sup(\alpha F(f),\alpha G(f))$ and since $\alpha\geq 0$, 
we conclude that $\sup(F,G) (\alpha f)=\alpha\sup(F(f),G(f))=\alpha\sup(F,G)(f)$ and $\sup(F,G)$ is positively 
homogeneous. 

The supremum of monotonic function is a monotonic function hence $\sup(F,G)$ is monotonic.

Let $f,g\in\mesp{X}$. We have $F(f+g)\leq F(f)+F(g)$ and $G(f+g)\leq G(f)+G(g)$, taking the supremum we get 
$\sup(F,G)(f+g)\leq \sup(F(f)+F(g),G(f)+G(g))\leq \sup(F(f),G(f))+\sup(F(g),G(g))=\sup(F,G)(f)+\sup(F,G)(g)$ 
and $\sup(F,G)$ is an upper parametric prevision. 

Now let $(f_k)_{k\in nn}$ be an ascending sequence of elements of $\mesp{X}$ (and thus $\supi_{k\in \nn} f_k\in\mesp{X}$.
We have $\sup(F,G)(\supi_{k\in \nn}f_k)=\sup(F(\supi_{k\in \nn} f_k),G(\supi_{k\in \nn} f_k))=\sup(\supi_{k\in \nn} 
F(f_k),\supi_{k\in \nn} G(f_k))$ and the suprema commute and then $\sup(F,G)(\supi_{k\in \nn} f_k)=\supi_{k\in \nn} 
\sup(F,G)(f_k)$. We conclude that $\sup(F,G)$ is $\omega$-continuous.

\item Let $F,G$ be two upper $\omega$-continuous parametric prevision. 

Since for all $f,g\in\mesp{X}$, $F(f)$ and $G(g)$ belong to $\mesp{X}$ then 
for all $h\in\mesp{X}$, $F(G(h))$ belongs to $\mesp{X}$.

Let $\alpha\geq 0$ and $f\in\mesp{X}$, since $F,G$ are positively homogeneous then 
$F\circ G (\alpha f)=F(G(\alpha f))=F(\alpha G(f))=\alpha F\circ G(f)$, 
we conclude that  $F\circ G$ is positively homogeneous.

The composition of two monotonic maps is also monotonic thus $F\circ G$ is monotonic.

Let $f,g\in\mesp{X}$. We have $G(f+g)\leq G(f)+G(g)$, and since $F$ is monotonic,
$F\circ G(f+g)\leq F(G(f)+G(g))\leq F\circ G(f)+F\circ G(g)$ and $F\circ G$ is an upper 
parametric prevision. 

Now let $(f_k)_{k\in nn}$ be an ascending sequence in $\mesp{X}$. We have 
$F\circ G(\supi_{k\in \nn} f_k)=F(G(\supi_{k\in \nn} f_k))
=F(\supi_{k\in \nn} G(f_k))=\supi_{k\in \nn} F\circ G(f_k)$ and 
we conclude that $F\circ G$ is $\omega$-continuous.
\end{enumerate}
\end{proof}

The main difference between prevision and parametric prevision is the co-domain.
Since the domain and the co-domain are the same, we can compose two parametric previsions 
to construct a new one. It allows us to think about least fixed points of parametric previsions.

\begin{defi}[Previsions]
Let $X$ be a non-empty set. Let $F$ be a map from $\mesp{X}$ to $[0,+\infty]$.
The map $F$ is said to be a prevision if:
\begin{enumerate}
\item $F$ is positively homogeneous;
\item $F$ is monotonic;
\end{enumerate}
Moreover, a prevision is said to be:
\begin{enumerate}
\item (lower) $F(f+g)\geq F(f)+F(g)$, for all functions $f,g\in\mesp{X}$;
\item (upper) $F(f+g)\leq F(f)+F(g)$, for all functions $f,g\in\mesp{X}$;
\item (linear) $F(f+g)=F(f)+F(g)$, for all functions $f,g\in\mesp{X}$;
\item ($\omega$-continuous) for all ascending family $(f_n)_{n\in \nn}\in\mesp{X}$, 
$F(\supi_{n\in \nn} f_n)=\supi_{n\in\nn} F(f_n)$.
\end{enumerate}
\end{defi}

The set $\mesp{X}$ is equipped with the pointwise ordering. The following proposition shows why the 
term \emph{parametric} appears in Definition~\ref{def:parametric}. The space of parameters is the same 
of domain of the functions of $\mesp{X}$ i.e. the set X. When we fix a parameter, we get a classical 
prevision.

\begin{proposition}[parametric prevision and previsions]
\label{equiv}
The parametric $F$ from $\mesp{X}$ to itself is a parametric (upper,lower,linear,$\omega$-continuous) 
prevision iff for all $x\in X$, the map $F_x$ from $\mesp{X}$ to $[0,+\infty]$ defined as $F_x(f)=F(f)(x)$ 
for all $f\in\mesp{X}$ is a (upper,lower,linear,$\omega$-continuous) classical prevision and the maps 
$x\to F_x(h)$ are measurable for all $h\in\mesp{X}$.
\end{proposition}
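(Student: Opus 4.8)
The plan is to exploit that every piece of structure on $\mesp{X}$ relevant here---the ordering, addition, scalar multiplication, and countable suprema---is computed pointwise, so that each axiom defining a parametric prevision is equivalent to the conjunction over all $x \in X$ of the corresponding axiom for the fiber map $F_x$. The argument is essentially a systematic pointwise translation, with one genuinely structural step.

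First I would record the obvious correspondence between maps $F$ from $\mesp{X}$ into the set of all functions $X \to [0,+\infty]$ and families $(F_x)_{x \in X}$ of maps $F_x : \mesp{X} \to [0,+\infty]$, given by $F_x(f) = F(f)(x)$. Under this correspondence, the requirement that $F$ actually take values in $\mesp{X}$---i.e.\ that $F(f)$ be measurable for every $f$---is, by definition of $\mesp{X}$, the requirement that $x \mapsto F_x(f)$ be measurable for every $f$. This already accounts for the measurability clause in the statement: in the forward direction it is automatic, since $F(f) \in \mesp{X}$ means $F(f)$ is measurable; and in the converse direction it is exactly what is needed to guarantee that the map $F$ assembled from the $F_x$ is a well-defined map \emph{into} $\mesp{X}$.

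Next I would translate the two baseline axioms. Positive homogeneity of $F$ reads $F(\alpha f) = \alpha F(f)$ as an equality in $\mesp{X}$; evaluating at $x$ and using that scalar multiplication is pointwise gives $F_x(\alpha f) = \alpha F_x(f)$, and conversely. Monotonicity is identical, using that $\leq$ on $\mesp{X}$ is the pointwise order. The same pattern disposes of the upper, lower, and linear clauses: each is an (in)equality between elements of $\mesp{X}$ built only from $+$, which is pointwise, so it holds in $\mesp{X}$ iff it holds after evaluation at every $x$. In each case ``$\Rightarrow$'' is evaluation at a fixed $x$, and ``$\Leftarrow$'' is the assertion that a pointwise identity valid for all $x$ is an identity of functions.

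The only step needing real care is $\omega$-continuity, and this is where I expect the sole subtlety. Here I would invoke the fact recalled just before the proposition, namely that $\mesp{X}$ is stable under countable suprema and that such suprema are computed pointwise. Given an ascending sequence $(f_n)_{n\in\nn}$ in $\mesp{X}$, the supremum $\supi_{n\in\nn} f_n$ again lies in $\mesp{X}$, and $\omega$-continuity of $F$ says $F(\supi_{n\in\nn} f_n) = \supi_{n\in\nn} F(f_n)$ in $\mesp{X}$; evaluating at $x$ and using that the right-hand supremum is pointwise yields $F_x(\supi_{n\in\nn} f_n) = \supi_{n\in\nn} F_x(f_n)$. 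Conversely, if every $F_x$ is $\omega$-continuous, this pointwise identity holds at each $x$, and since $\supi_{n\in\nn} F(f_n) \in \mesp{X}$ by stability, it assembles into the identity $F(\supi_{n\in\nn} f_n) = \supi_{n\in\nn} F(f_n)$ in $\mesp{X}$. Collecting the equivalences for the two baseline axioms, for the chosen variant among upper, lower, linear, and $\omega$-continuous, and for the measurability clause then completes the proof.
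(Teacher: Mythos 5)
Your proof is correct: the paper actually states Proposition~\ref{equiv} without any proof, and your pointwise-translation argument (each axiom holds in $\mesp{X}$ iff it holds after evaluation at every $x$, with the measurability clause exactly accounting for $F$ landing in $\mesp{X}$, and countable suprema in the pointwise order being pointwise suprema) is precisely the routine verification the authors left implicit. You correctly identified the only point of substance, namely the $\omega$-continuity clause, and your use of stability of $\mesp{X}$ under countable suprema there is sound.
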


The nondeterminism due to the tests and the value $\err$ implies that we cannot expect linearity. Indeed the 
binary supremum of the sum is not equal to the sum of the suprema, we have only an inequality. In the case of 
$Ans=\rr_+\cup\{+\infty\}$, we can establish that the weakest preconditions and continuation-passing style 
semantics defines an upper parametric prevision. To prove this result, we need a lemma which says that the 
semantics maps $\mesp{X}$ to itself.

\begin{lemma}
\label{weakmeasurable}
If $\kappa\in\mesp{\esigma}$, then, for all instructions $\lsupexp{\ell_{1}}{P}$, 
$\weak{\lsupexp{\ell_1}{P},\ell_{2}}(\kappa)\in\mesp{\esigma}$.
\end{lemma}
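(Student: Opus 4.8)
The plan is to prove the statement by structural induction on the instruction $\lsupexp{\ell_1}{P}$, showing at each step that the operator $\weak{\lsupexp{\ell_1}{P},\ell_2}$ maps $\mesp{\esigma}$ into itself; so fix $\kappa\in\mesp{\esigma}$ throughout. Two cases are immediate. For $\skipp$ the operator is the identity, whence $\weak{\lsupexp{\ell_1}{\skipp},\ell_2}(\kappa)=\kappa\in\mesp{\esigma}$. For a sequence $\lsupexp{\ell_1}{P};\lsupexp{\ell_2}{Q}$ the operator sends $\kappa$ to $\weak{\lsupexp{\ell_1}{P},\ell_2}\left(\weak{\lsupexp{\ell_2}{Q},\ell_3}(\kappa)\right)$, so applying the induction hypothesis first to $Q$ and then to $P$ yields measurability. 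For an assignment $\lsupexp{\ell_1}{x:=e}$, I would first show that the substitution map $s\colon\rho\mapsto\rho[x\to\seme{e}(\rho)]$ is measurable from $\esigma$ to $\esigma$: since $\esigma$ carries the product $\sigma$-algebra, which coincides with its Borel $\sigma$-algebra, it suffices to check each coordinate of $s$, and the $y$-coordinate for $y\neq x$ is the projection $\rho\mapsto\rho(y)$ while the $x$-coordinate is $\seme{e}$, both measurable, the latter by Proposition~\ref{expressionsmeasurable}. Then $\weak{\lsupexp{\ell_1}{x:=e},\ell_2}(\kappa)=\kappa\circ s$ is measurable as a composition.

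The conditional and loop cases carry the real content, and both rest on one observation. Proposition~\ref{testsmeasurable} gives that $\seme{t}^{-1}(\Diamond\{0\})$ and $\seme{t}^{-1}(\Diamond\{1\})$ are Borel; since $\seme{t}(\rho)$ is always a nonempty subset of $\{0,1\}$, the three sets $A_1=\{\rho\mid\seme{t}(\rho)=\{1\}\}$, $A_0=\{\rho\mid\seme{t}(\rho)=\{0\}\}$ and $A_{01}=\{\rho\mid\seme{t}(\rho)=\{0,1\}\}$ are Boolean combinations of these two inverse images, hence Borel, and they partition $\esigma$. For $\lsupexp{\ell}{\ifp}\ t\ \thenp\lsupexp{\ell_1}{P_1}\ \elsep\lsupexp{\ell_0}{P_0}$, the induction hypothesis makes $\weak{\lsupexp{\ell_1}{P_1},\ell_2}(\kappa)$ and $\weak{\lsupexp{\ell_0}{P_0},\ell_2}(\kappa)$ measurable; the result equals the first on $A_1$, the second on $A_0$, and their pointwise binary supremum on $A_{01}$, the latter being measurable in $[0,+\infty]$. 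Gluing measurable functions along a finite Borel partition is measurable, since the preimage of any Borel set is the finite union of its intersections with the pieces, so the conditional case is settled.

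For the loop $\lsupexp{\ell_1}{\whilep}\ t\ \lsupexp{\ell_2}{P}$, recall its semantics is $\supi_{n\in\nn} H^n(\bot_{\mathbf{F}(\esigma,\ANS)})(\kappa)$ with $H=H_{t,\lsupexp{\ell_2}{P},\ell_3}$. I would run an inner induction on $n$ to prove each $H^n(\bot_{\mathbf{F}(\esigma,\ANS)})(\kappa)$ measurable. The base case $H^0(\bot_{\mathbf{F}(\esigma,\ANS)})(\kappa)$ is the constant map $\rho\mapsto\bot_{\ANS}=0$, hence measurable. For the step, write $\varphi=H^n(\bot_{\mathbf{F}(\esigma,\ANS)})$; then $H(\varphi)(\kappa)$ is assembled from $\weak{\lsupexp{\ell_2}{P},\ell_3}(\varphi(\kappa))$ and $\kappa$ exactly as in the conditional case, along the same partition $A_1,A_0,A_{01}$ with a binary supremum on $A_{01}$. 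Here $\varphi(\kappa)$ is measurable by the inner hypothesis, $\weak{\lsupexp{\ell_2}{P},\ell_3}$ preserves measurability by the outer structural hypothesis, so the same gluing argument shows $H(\varphi)(\kappa)\in\mesp{\esigma}$. Finally, since $\mesp{\esigma}$ is stable under countable suprema, the ascending supremum over $n\in\nn$ lies in $\mesp{\esigma}$, closing the loop case and the induction.

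The main obstacle is precisely this test-driven case analysis: one must convert the multifunction measurability of Proposition~\ref{testsmeasurable}, phrased through $\Diamond\{0\}$ and $\Diamond\{1\}$, into Borelness of the three concrete regions $A_1,A_0,A_{01}$, and then combine the gluing-over-a-Borel-partition principle with measurability of binary suprema in $[0,+\infty]$. This one mechanism discharges both the conditional and, through the inner induction on the iterates of $H$, the loop; everything else is a routine use of the induction hypothesis and closure of $\mesp{\esigma}$ under composition and countable suprema.
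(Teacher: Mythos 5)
Your proof is correct and takes essentially the same approach as the paper's: structural induction with identical treatment of skip, assignment (coordinatewise measurability of the substitution map via Proposition~\ref{expressionsmeasurable}), sequence (composition), the test-driven three-way case split, and an inner induction on the iterates of $H_{t,\lsupexp{\ell_2}{P},\ell_3}$ for loops, concluded by closure under countable suprema. The only cosmetic difference is that you glue measurable functions along the Borel partition $A_1,A_0,A_{01}$ where the paper writes the same function as a sum of products with the indicators $\chi_{A_i}$ and invokes stability of $\mesp{\esigma}$ under sum, product and binary suprema; both hinge on the same Borelness facts from Proposition~\ref{testsmeasurable}, which you in fact justify slightly more explicitly as Boolean combinations of $\seme{t}^{-1}(\Diamond\{0\})$ and $\seme{t}^{-1}(\Diamond\{1\})$.
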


\begin{proof}
We prove this result by induction on the instructions. 

$\bullet$ Since the $\skipp$ is the identity map, thus $\kappa\in\mesp{\esigma}$ implies that
$\weak{\lsupexp{\ell_1}{\skipp}}(\kappa)$ also belongs to $\mesp{\esigma}$. 

$\bullet$ Now, we consider the assignment. We define the map $h:\esigma\mapsto\esigma$ such that 
at $\rho$ $h$ associates $\rho(y)$ if $y\neq x$ and $\seme{e}(\rho)$ otherwise.
A coordinate of $h$ is either coordinate projection or the concrete semantics 
of an expression which from Proposition~\ref{expressionsmeasurable} is measurable. 
We conclude that $h$ is measurable since it is componentwise measurable.
We conclude that $\weak{\lsupexp{\ell_1}{x:=e},\ell_2}(\kappa)\in\mesp{\esigma}$ 
(composition of) for all $\kappa\in\mesp{\esigma}$. 

$\bullet$  Let $\kappa\in\mesp{\esigma}$. The function 
$\weak{\lsupexp{\ell_1}{P};\lsupexp{\ell_2}{Q},\ell_{3}}(\kappa)$ is defined as 
$\weak{\lsupexp{\ell_1}{P};\ell_{2}}(\weak{\lsupexp{\ell_2}{Q},\ell_{3}}(\kappa))$. 
Suppose that $\weak{\lsupexp{\ell_1}{P},\ell_{2}}(\kappa')$ and $\weak{\lsupexp{\ell_2}{Q},\ell_{3}}(\kappa'')$ belong
to $\mesp{\esigma}$ for all $\kappa',\kappa''\in\mesp{\esigma}$. Then $\kappa':=\weak{\lsupexp{\ell_2}{Q},\ell_{3}}(\kappa))$ is positive and measurable.
We conclude that $\weak{\lsupexp{\ell_1}{P};\ell_{2}}(\kappa')\in\mesp{\esigma}$.

$\bullet$ Let $\kappa\in\mesp{\esigma}$. We have:
\[
\begin{array}{ll}
\weak{\lsupexp{\ell_1}{\ifp}\ t\ \thenp \lsupexp{\ell_2}{P}\ \elsep \lsupexp{\ell_3}{Q},\ell_4}(\kappa){}
&=\sup\left(\weak{\lsupexp{\ell_1}{P},\ell_{4}}(\kappa),\weak{\lsupexp{\ell_2}{Q},\ell_{4}}(\kappa)\right)
\chi_{\{\rho\mid\seme{t}(\rho)=\{0,1\}\}}\\
&+\weak{\lsupexp{\ell_1}{P},\ell_{4}}(\kappa)\chi_{\{\rho\mid\seme{t}(\rho)=\{1\}\}}\\
&+\weak{\lsupexp{\ell_2}{Q},\ell_{4}}(\kappa)\chi_{\{\rho\mid\seme{t}(\rho)=\{0\}\}}
\end{array}
\]

Suppose that $\weak{\lsupexp{\ell_1}{P},\ell_{4}}(\kappa)$ and $\weak{\lsupexp{\ell_2}{Q},\ell_{4}}(\kappa)$ 
are in $\mesp{\esigma}$. From Proposition~\ref{testsmeasurable} the functions $\chi_{\{\rho\mid\seme{t}(\rho)={1}\}}$,
$\chi_{\{\rho\mid\seme{t}(\rho)={0}\}}$ and $\chi_{\{\rho\mid\seme{t}(\rho)=\{0,1\}\}}$ are positive measurable functions. 
Since $\mesp{\esigma}$ is stable by product, sum and binary suprema, thus 
$\weak{\lsupexp{\ell_1}{\ifp}\ t\ \thenp \lsupexp{\ell_2}{P}\ \elsep \lsupexp{\ell_3}{Q},\ell_4}(\kappa)\in\mesp{\esigma}$. 

$\bullet$ Let $\kappa\in\mesp{\esigma}$. We have, from Lemma~\ref{wpprop}:
\[
\begin{array}{ll}
\weak{\lsupexp{\ell_1}{\whilep}\ t \lsupexp{\ell_2}{P},\ell_3}(\kappa)=\left(\lfp{H_{t,
      \lsupexp{\ell_2}{P}, \ell_1}}\right)(\kappa)
      &=\left(\supi_{n\in\nn} H_{t,\lsupexp{\ell_2}{P}, \ell_1}^{n}(\bot_{\mathbf{F}(\esigma,\brp)})\right)(\kappa)\\
      &=\supi_{n\in\nn} \left(H_{t,\lsupexp{\ell_2}{P}, \ell_1}^{n}(\bot_{\mathbf{F}(\esigma,\brp)})(\kappa)\right)
\end{array}
\]
We suppose that
$\weak{\lsupexp{\ell_2}{P},\ell_1}(\kappa')\in\mesp{\esigma}$ for all
$\kappa'\in\mesp{\esigma}$.Since $\mesp{\esigma}$ is an
$\omega$-cpo the smallest of which is the null function.  It suffices
to show that for all $n\in\nn$,
$\left(H_{t,
      \lsupexp{\ell_2}{P}, \ell_1}^{n}(\bot_{\mathbf{F}(\esigma,\brp)})\right)(\kappa)$
belongs to $\mesp{\esigma}$.  We prove this property by induction on
integers. The null function is positive and measurable. Now, we
suppose that there exists an integer $n$ such that
$\left(H_{t,
      \lsupexp{\ell_2}{P}, \ell_1}^{n}(\bot_{\mathbf{F}(\esigma,\brp)})\right)(\kappa)$
belongs to $\mesp{\esigma}$. We have:
\[
\begin{array}{ll}
\left(H_{t,\lsupexp{\ell_2}{P}, \ell_1}^{n+1}(\bot_{\mathbf{F}(\esigma,\brp)})\right)(\kappa)
&=\sup\left(\weak{\lsupexp{\ell_2}{P},\ell_1}\left(\left(H_{t,\lsupexp{\ell_2}{P}, \ell_1}^{n}
(\bot_{\mathbf{F}(\esigma,\brp)})\right)(\kappa)\right),\kappa\right)\chi_{\{\rho\mid\seme{t}(\rho)=\{0,1\}\}}\\
&+\weak{\lsupexp{\ell_2}{P},\ell_1}\left(\left(H_{t,\lsupexp{\ell_2}{P}, \ell_1}^{n}
(\bot_{\mathbf{F}(\esigma,\brp)})\right)(\kappa)\right)\chi_{\{\rho\mid\seme{t}(\rho)=\{1\}\}}\\
&+\kappa\chi_{\{\rho\mid\seme{t}(\rho)=\{0\}\}}
\end{array}
\]
From induction hypothesis (on instructions and integers $n$), Proposition~\ref{testsmeasurable} and stability of product, sum in $\mesp{\esigma}$
and binary suprema, we conclude that $\left(H_{t,
      \lsupexp{\ell_2}{P}, \ell_1}^{n+1}(\bot_{\mathbf{F}(\esigma,\brp)})\right)(\kappa)$ belongs to $\mesp{\esigma}$.
In conclusion, for all $n\in\nn$, $\left(H_{t,
      \lsupexp{\ell_2}{P}, \ell_1}^{n}(\bot_{\mathbf{F}(\esigma,\brp)})\right)(\kappa)$ belongs to $\mesp{\esigma}$ and 
$\weak{\lsupexp{\ell_1}{\whilep}\ t \lsupexp{\ell_2}{P},\ell_3}(\kappa)\in\mesp{\esigma}$.
\end{proof}

\begin{proposition}
When $Ans=\rr_+\cup\{+\infty\}$ and $X=\esigma$, for every instruction $\lsupexp{\ell}{P}$, for every label $\ell'$, 
$\weak{\lsupexp{\ell}{P},\ell'}$ is an upper $\omega$-continuous parametric prevision.
\end{proposition}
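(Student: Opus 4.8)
The plan is to check the five defining conditions of an upper $\omega$-continuous parametric prevision (Definition~\ref{def:parametric}) for the map $\weak{\lsupexp{\ell}{P},\ell'}$, namely that it sends $\mesp{\esigma}$ into itself, is positively homogeneous, is monotonic, satisfies the upper inequality, and is $\omega$-continuous. Two of these are already available: mapping $\mesp{\esigma}$ to itself is exactly Lemma~\ref{weakmeasurable}, and $\omega$-continuity is exactly Lemma~\ref{weakScott}, both proved for every instruction. So the real work reduces to the three algebraic conditions --- positive homogeneity, monotonicity, and subadditivity $\weak{\cdots}(f+g)\leq\weak{\cdots}(f)+\weak{\cdots}(g)$ --- which I would establish simultaneously by induction on the structure of $\lsupexp{\ell}{P}$.

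For the base cases, $\weak{\lsupexp{\ell_1}{\skipp}}$ is the identity, and $\weak{\lsupexp{\ell_1}{x:=e},\ell_2}(\kappa)=\fun\ \rho\mapsto\kappa(\rho[x\to\seme{e}(\rho)])$ is precomposition with a fixed map on environments; both are visibly positively homogeneous, monotone, and in fact additive, hence upper. For the sequence, $\weak{\lsupexp{\ell_1}{P};\lsupexp{\ell_2}{Q},\ell_3}=\weak{\lsupexp{\ell_1}{P},\ell_2}\circ\weak{\lsupexp{\ell_2}{Q},\ell_3}$, and since by the induction hypothesis both factors are upper $\omega$-continuous parametric previsions, Proposition~\ref{compo}(3) delivers the conclusion at once.

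The interesting case is the conditional $\lsupexp{\ell_1}{\ifp}\ t\ \thenp\lsupexp{\ell_2}{P}\ \elsep\lsupexp{\ell_3}{Q}$. Fixing $\rho$, the value $\weak{\cdots}(\kappa)(\rho)$ equals $\weak{\lsupexp{\ell_2}{P},\ell_4}(\kappa)(\rho)$ when $\seme{t}(\rho)=\{1\}$, equals $\weak{\lsupexp{\ell_3}{Q},\ell_4}(\kappa)(\rho)$ when $\seme{t}(\rho)=\{0\}$, and equals $\sup(\weak{\lsupexp{\ell_2}{P},\ell_4}(\kappa)(\rho),\weak{\lsupexp{\ell_3}{Q},\ell_4}(\kappa)(\rho))$ when $\seme{t}(\rho)=\{0,1\}$. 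Positive homogeneity and monotonicity are then immediate from the induction hypothesis together with $\sup(\alpha a,\alpha b)=\alpha\sup(a,b)$ and the monotonicity of $\sup$. For subadditivity, the first two cases are precisely the induction hypotheses for $P$, resp.\ $Q$, while the third --- the only genuinely non-trivial one --- reduces, after bounding $\weak{\lsupexp{\ell_2}{P},\ell_4}(f+g)(\rho)\leq a+a'$ and $\weak{\lsupexp{\ell_3}{Q},\ell_4}(f+g)(\rho)\leq b+b'$ by induction (with $a=\weak{\lsupexp{\ell_2}{P},\ell_4}(f)(\rho)$, $a'=\weak{\lsupexp{\ell_2}{P},\ell_4}(g)(\rho)$, and $b,b'$ analogously for $Q$), to the elementary inequality $\sup(a+a',b+b')\leq\sup(a,b)+\sup(a',b')$. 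This is exactly the point where linearity fails and only the upper inequality survives, as anticipated by the remark on the nondeterminism created by $\err$, and I expect it to be the main obstacle of the whole argument.

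Finally, for loops, abbreviate $H=H_{t,\lsupexp{\ell_2}{P},\ell_1}$. I would observe that $H(\varphi)$ has exactly the conditional shape treated above, with ``then'' branch $\weak{\lsupexp{\ell_2}{P},\ell_1}\circ\varphi$ and ``else'' branch the identity. Hence if $\varphi$ is an upper $\omega$-continuous parametric prevision, then so is $\weak{\lsupexp{\ell_2}{P},\ell_1}\circ\varphi$ by Proposition~\ref{compo}(3) (using the outer induction hypothesis on $P$), the identity is one (the $\skipp$ case), and the conditional argument just given shows $H(\varphi)$ is one; thus $H$ preserves the class. Since $\bot_{\mathbf{F}(\esigma,\brp)}=\underline{0}$ lies in this class and $H$ is monotone (Lemma~\ref{wpprop}), the iterates $H^n(\bot_{\mathbf{F}(\esigma,\brp)})$ form an ascending chain of upper $\omega$-continuous parametric previsions, so by Proposition~\ref{compo}(1) their supremum $\weak{\lsupexp{\ell_1}{\whilep}\ t \lsupexp{\ell_2}{P},\ell_3}=\supi_{n\in\nn}H^n(\bot_{\mathbf{F}(\esigma,\brp)})$ is again one. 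Combining the structural induction for the three algebraic properties with Lemmas~\ref{weakmeasurable} and \ref{weakScott} for the remaining two then yields the proposition.
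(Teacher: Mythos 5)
Your proposal is correct and follows essentially the same route as the paper: delegate stability of $\mesp{\esigma}$ and $\omega$-continuity/monotonicity to Lemmas~\ref{weakmeasurable} and~\ref{weakScott}, then prove positive homogeneity and the upper inequality by induction on instructions, using Proposition~\ref{compo} for sequences (composition), for the $\seme{t}(\rho)=\{0,1\}$ branch of conditionals (binary suprema, which is exactly your inequality $\sup(a+a',b+b')\leq\sup(a,b)+\sup(a',b')$), and for loops (ascending suprema). If anything, your loop case is tighter than the paper's, which only explicitly verifies the first iterate $H_{t,\lsupexp{\ell_2}{P},\ell_3}(\bot_{\mathbf{F}(\esigma,\ANS)})$, whereas you show $H$ preserves the class so that every iterate $H^n(\bot_{\mathbf{F}(\esigma,\ANS)})$ qualifies before taking the supremum.
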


\begin{proof}
The fact that for every instruction $\lsupexp{\ell}{P}$, for every label $\ell'$, 
$\weak{\lsupexp{\ell}{P},\ell'}$ is $\omega$-continuous and monotonic follows directly from Lemma~\ref{weakScott}.
The measurability has just been proved in Lemma~\ref{weakmeasurable}.
It suffices to show the positive homogeneity and the "upper condition". We prove it by induction on instructions. 

\begin{itemize}
\item[$\bullet$] The identity is clearly a linear $\omega$-continuous prevision thus $\weak{\lsupexp{\ell}{\skipp},\ell'}$ 
is an upper parametric prevision.

\item[$\bullet$] Suppose, we have a map $g:\esigma\to \esigma$ and consider a map $F$ from $\mesp{\esigma}$ to 
itself defined by $F(f)=f\circ g$ for all $f\in\mesp{\esigma}$. The map $F$ is clearly a linear 
$\omega$-continuous prevision, this implies that $\weak{\lsupexp{\ell_1}{x:=e},\ell_2}$ is an upper parametric prevision.
 
\item[$\bullet$] By induction hypothesis, the maps $\weak{\lsupexp{\ell_{2}}{P},\ell_{4}}$ and $\weak{\lsupexp{\ell_{3}}{Q},\ell_{4}}$
are upper parametric previsions by Proposition~\ref{compo} (the third point) $\weak{\lsupexp{\ell_{1}}{P};\lsupexp{\ell_{2}}{Q},\ell_{3}}$
is an upper parametric prevision.

\item[$\bullet$] We use Proposition~\ref{equiv}. For all $\rho$ such that $\seme{t}(\rho)=\{0\}$, 
\[
\kappa\mapsto\weak{\lsupexp{\ell_1}{\ifp}\ t\ \thenp \lsupexp{\ell_2}{P}\ \elsep
\lsupexp{\ell_3}{Q},\ell_4}(\kappa)(\rho)=\left(\weak{\lsupexp{\ell_0}{P_0},\ell_2}(\kappa)\right)(\rho)
\]    
     which is by induction hypothesis a classical upper prevision. 
    For all $\rho$ such that $\seme{t}(\rho)=\{1\}$, the same argument leads to the result. 
    Now suppose that $\seme{t}(\rho)=\{0,1\}$, by Proposition~\ref{compo} (the second point), we conclude that 
    $\kappa\mapsto\weak{\lsupexp{\ell_1}{\ifp}\ t\ \thenp
    \lsupexp{\ell_2}{P}\ \elsep
    \lsupexp{\ell_3}{Q},\ell_4}(\kappa)(\rho)$ is, by induction hypothesis, a classical upper prevision. 
    The map $\kappa\mapsto\weak{\lsupexp{\ell_1}{\ifp}\ t\ \thenp
    \lsupexp{\ell_2}{P}\ \elsep
    \lsupexp{\ell_3}{Q},\ell_4}(\kappa)(\rho)$ is a classical upper prevision for all $\rho\in\esigma$ then
    $\weak{\lsupexp{\ell_1}{\ifp}\ t\ \thenp
    \lsupexp{\ell_2}{P}\ \elsep
    \lsupexp{\ell_3}{Q},\ell_4}$ is an upper parametric prevision.
\item[$\bullet$] By Proposition~\ref{compo} (the first point), it suffices to prove that the auxilary map 
$H_{t, \lsupexp{\ell_2}{P}, \ell_3}(\bot_{\mathbf{F}(\esigma,\ANS)})$
 is an upper $\omega$-continuous parametric prevision. From Lemma~\ref{wpprop}, 
 $H_{t, \lsupexp{\ell_2}{P}, \ell_3}(\bot_{\mathbf{F}(\esigma,\ANS)})$
 is $\omega$-continuous and monotonic. It suffices to show that 
 $H_{t, \lsupexp{\ell_2}{P}, \ell_3}(\bot_{\mathbf{F}(\esigma,\ANS)})$ 
 is positively homogeneous and upper. We prove the result by using Proposition~\ref{equiv}.
 Let $\rho\in\esigma$. Suppose that $\seme{t}(\rho)=\{1\}$.
The result follows from the induction hypothesis. Now suppose that $\seme{t}(\rho)=\{0\}$, 
 $H_{t, \lsupexp{\ell_2}{P}, \ell_3}(\bot_{\mathbf{F}(\esigma,\ANS)})$ is the identity and the result 
 follows from the linearity of the identity. Finally suppose that $\seme{t}(\rho)=\{0,1\}$, 
 the result follows from the stability of upper parametric prevision by binary suprema.
\end{itemize}
\end{proof}

\section{Special case of inputs}
\label{sec:input}

In this subsection, we are interested in interaction between the program and an external environment.
This interaction can be viewed as a sensor which saves data from the external environment thanks to a command \emph{input}.
We suppose that these data are at the same time noisy and imprecise. Mathematically, it can be modelled by 
$\omega$-capacities. It means that we want to represent for a fixed environment $\rho$ the \emph{input} as a $\omega$-capacity. 
We assume that only $k$ variables $x_{i_1},x_{i_2},\ldots,x_{i_k}$ are affected by the input.

A $\omega$-capacity on a topological space $X$ is a map $\nu:\bor{X}\mapsto \rr_+$ such that:
\[
\nu(\emptyset)=0,\ \nu(U)\geq 0\text{ and }\forall\, U\in\bor{X}\enspace .
\]

The $\omega$-capacity is said to be:
\begin{itemize}
\item monotonic iff $\forall\, U,V\in\bor{X}$:

\[
U\subseteq V \implies \nu(U)\leq \nu(V)\ ;
\]

\item continuous iff for all nondecreasing sequences $(U_n)_{n\in\nn}\subseteq \bor{X}$:

\[
\nu\left(\bigcupi_{n\in\nn} U_n\right)=\supi_{n\in\nn}\nu(U_n)\ ;
\]
{}
\item convex iff for all $U,V\in \bor{X}$:

\[
\nu\left(U\cup V\right)+\nu\left( U\cap V\right)\geq \nu(U)+\nu(V)
\ ;
\]

\item concave iff for all $U,V\in \bor{X}$:

\[
\nu\left(U\cup V\right)+\nu\left( U\cap V\right)\leq \nu(U)+\nu(V)
\ ;
\]

\end{itemize}
We will use the following result relying convexity and sub(super)linearity of the Choquet 
integrals.

\begin{proposition}
\label{additive}
Let $X$ be a topological space. Let $f$ and $g$ be in $\mesp{X}$. Let $\alpha$, $\beta$ two positive reals.

Let $\nu$ be a convex $\omega$-capacity, then the Choquet integral is superlinear:
\[
\choqint{x\in X} \alpha f(x)+\beta g(x)d\nu\geq \alpha\choqint{x\in X} f(x)d\nu+\beta\choqint{x\in X} g(x)d\nu
\]

Let $\mu$ be a concave $\omega$-capacity, then the Choquet integral is sublinear:
\[
\choqint{x\in X} \alpha f(x)+\beta g(x)d\mu\leq \alpha\choqint{x\in X} f(x)d\mu+\beta\choqint{x\in X} g(x)d\mu
\]
\end{proposition}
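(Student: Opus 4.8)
The plan is to reduce both inequalities to the single defining inequality of a convex (resp.\ concave) $\omega$-capacity evaluated on the level sets of $f$ and $g$, in three stages: strip off the scalars by homogeneity, reduce to step functions, and carry out the level-set computation. Throughout I would work from the layer-cake description $\choqint{x\in X} f(x)\,d\nu = \int_0^{+\infty} \nu(\{x \mid f(x) > t\})\,dt$. First I would dispatch positive homogeneity: the substitution $t = \alpha s$ in the inner Lebesgue integral gives $\choqint{x\in X}\alpha f(x)\,d\nu = \alpha\,\choqint{x\in X} f(x)\,d\nu$ for every $\alpha \geq 0$ (the case $\alpha=0$ being covered by the arithmetic conventions). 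Hence it suffices to prove the unscaled statements, superadditivity $\choqint{x\in X}(f+g)\,d\nu \geq \choqint{x\in X}f\,d\nu + \choqint{x\in X}g\,d\nu$ for $\nu$ convex and the reverse inequality for $\mu$ concave; applying these to $\alpha f$ and $\beta g$ and using homogeneity recovers the full super-/sub\-linearity. The two claims are perfectly parallel, the only difference being the direction of the modularity inequality, so I would carry them out simultaneously.

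Next I would treat step functions, written in layer form $f = \sum_i \lambda_i \chi_{A_i}$ with $A_1 \supseteq A_2 \supseteq \cdots$ and $\lambda_i > 0$, so that $\choqint{x\in X}f\,d\nu = \sum_i \lambda_i\,\nu(A_i)$, and similarly for $g$. The model case is two indicator layers $f=a\chi_A$, $g=b\chi_B$ with, say, $a\leq b$: reading off the level sets of $f+g$ yields $\choqint{x\in X}(f+g)\,d\nu = a\,\nu(A\cup B) + (b-a)\,\nu(B) + a\,\nu(A\cap B)$, so the defect $\choqint{x\in X}(f+g)\,d\nu - \choqint{x\in X}f\,d\nu - \choqint{x\in X}g\,d\nu$ equals $a\,[\nu(A\cup B)+\nu(A\cap B)-\nu(A)-\nu(B)]$, which is $\geq 0$ precisely when $\nu$ is convex and $\leq 0$ precisely when $\mu$ is concave. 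Note that this computation uses only the modularity hypothesis, not monotonicity of the capacity, which is consistent with the hypotheses of the proposition.

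Finally I would lift the inequality from step functions to arbitrary $f,g\in\mesp{X}$. Each is the pointwise supremum of an increasing sequence of step functions $f_n \uparrow f$; since $\{f_n>t\}$ increases to $\{f>t\}$ for every $t$, the continuity of the $\omega$-capacity gives $\nu(\{f_n>t\}) \to \nu(\{f>t\})$, and the ordinary monotone convergence theorem in the variable $t$ then yields $\choqint{} f_n\,d\nu \to \choqint{} f\,d\nu$; the step-function inequalities therefore pass to the limit. I expect the main obstacle to be the general step-function case rather than the limiting argument: unlike the clean two-layer computation, the level sets of $f$ and $g$ interleave at arbitrary heights, so one must organise the bookkeeping---essentially a summation by parts over the sorted merged heights, i.e.\ the Lov\'asz-extension form of the integral---so that every elementary step is an instance of convexity (resp.\ concavity) applied to a \emph{nested} pair of level sets of $f$ and $g$. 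A secondary point to verify is that the passage to the limit relies only on the continuity already built into the notion of $\omega$-capacity, and that the $+\infty$ conventions are respected throughout so that no ill-defined difference of infinite integrals occurs.
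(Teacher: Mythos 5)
A preliminary remark: the paper contains no proof of Proposition~\ref{additive} to compare against --- it is invoked as a known fact about Choquet integrals (``We will use the following result\ldots'') and immediately put to use --- so your attempt must stand on its own. The parts you actually carry out are correct: positive homogeneity via the substitution $t=\alpha s$ in the layer-cake formula, the reduction to superadditivity, the two-indicator computation, and (modulo hypotheses discussed below) the monotone-convergence passage to general $f,g\in\mesp{X}$. The genuine gap is precisely the step you flag and then defer: superadditivity for \emph{general} step functions. This is not bookkeeping; it is the entire content of the theorem (essentially the Choquet--Lov\'asz theorem that supermodular set functions have superadditive Lov\'asz extensions). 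The two-indicator case does not propagate by induction in any direct way, because the Choquet integral is not additive, so you cannot split $f=\sum_i\lambda_i\chi_{A_i}$ and play the layers of $f$ against the layers of $g$ pairwise. Worse, the recipe you describe --- ``every elementary step is an instance of convexity applied to a \emph{nested} pair of level sets of $f$ and $g$'' --- cannot work as stated: if $U\subseteq V$, then $\nu(U\cup V)+\nu(U\cap V)=\nu(U)+\nu(V)$ holds with equality for \emph{every} set function, so convexity applied to nested pairs has no content. The force of convexity is needed on non-nested pairs, and those pairs are not level sets of $f$ and $g$ themselves but mixtures of the two.

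Here is a concrete way to close the gap. First prove: for every step function $h\in\mesp{X}$, every Borel set $A$, and every $a>0$,
\[
\choqint{x\in X} \left(h(x)+a\chi_{A}(x)\right)d\nu\ \geq\ \choqint{x\in X} h(x)d\nu\ +\ a\,\nu(A)\enspace.
\]
Indeed $\{h+a\chi_A>t\}=\left(A\cap\{h>t-a\}\right)\cup\{h>t\}$ and $\left(A\cap\{h>t-a\}\right)\cap\{h>t\}=A\cap\{h>t\}$; apply convexity to this (non-nested) pair for each $t$ and integrate over $t\in[0,+\infty)$, using
\[
\int_{0}^{+\infty}\nu\left(A\cap\{h>t-a\}\right)dt-\int_{0}^{+\infty}\nu\left(A\cap\{h>t\}\right)dt=a\,\nu(A)\enspace,
\]
which holds because $\{h>s\}=X$ for $s<0$ and all integrals involved are finite ($\nu$ is $\rr_+$-valued and $h$ is a step function). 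Superadditivity for step functions then follows by peeling off the layers of $f$ one at a time, and your limiting argument finishes the proof; the concave case is symmetric. On that limiting step, two hypotheses must be made explicit: your claim that continuity is ``already built into the notion of $\omega$-capacity'' is false for this paper's definitions --- the paper lists monotonicity and continuity as \emph{optional} properties of an $\omega$-capacity, and the proposition's statement assumes only convexity or concavity. Your monotone-convergence argument needs both: monotonicity, so that $n\mapsto\nu(\{f_n>t\})$ is nondecreasing and the monotone convergence theorem in $t$ applies, and continuity, to identify the limit. This is harmless in context (the capacities the paper integrates against are assumed monotonic and continuous, and without monotonicity the map $t\mapsto\nu(\{f>t\})$ need not even be measurable, so the integral itself would be ill-defined), but a complete proof should state that it uses these properties.
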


For $\rho\in\esigma$, we suppose that $\spesem{(x_{i_1},x_{i_2},\ldots,x_{i_k}=input( )}(\rho)$ is a monotonic continuous 
$\omega$-capacity $\nu$ on $\Va_{I}=\{x_{i_1},x_{i_2},\ldots,x_{i_k}\}$. We denote $\Va_{-I}=\{x\in\Va, x\notin \Va_{I}\}$ and 
we suppose that a certain $\rho_0:\Va_{-I}\mapsto \re$ (or with value in $\fe$) is given. We want to extend the $\omega$-capacity 
$\spesem{input}(\rho)$ to $\re$ (or $\fe$) with respect to the fact that the unaffected variables are represented by a fixed environment $\rho_0$.
We extend $\spesem{input}(\rho)$ to a $\omega$-capacity $\overline{\spesem{input}}(\rho)$ over $\esigma$ ($\simeq \re^{\Va}$ or $\simeq \fe^{\Va}$ ) as follows:

\[
\overline{\spesem{input}}(\rho)(C)=\spesem{input}(\rho)\left(\{x\in\re^{\Va_{I}} \mid (x,\rho_0)\in C\}\right)
\]
for all Borel sets $C$ of $\esigma$. This latter definition means that the measure of a Borel set
is completely determined by its affected part (by the instruction \emph{input}).

\begin{assum}
\label{assum:mesaure}
We assume that $\rho\mapsto \overline{\spesem{input}}(\rho)(U)$ is measurable for all $U\in\bor{\esigma}$.
\end{assum}

We define a last semantics which is the integration of "continuation" by a $\omega$-capacity.
Let $\kappa:\esigma\mapsto \brp$ be a positive measurable function. We define the semantics 
of the instruction \emph{input} as:

\[
\weak{input}(\kappa)(\rho):=\choqint{\rho'} \kappa(\rho')d\overline{\spesem{input}}(\rho)
\]

\begin{proposition}
Under the Assumption~\ref{assum:mesaure}, for all $\kappa\in\mesp{\esigma}$, 
the function $\rho\mapsto\weak{input}(\kappa)(\rho)$ belongs to $\mesp{\esigma}$.
\end{proposition}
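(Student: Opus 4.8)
The plan is to unfold the Choquet integral into its layer-cake (improper Riemann) form and then reduce it to a countable supremum of functions of $\rho$ that are measurable directly by Assumption~\ref{assum:mesaure}, the monotonicity of the capacity in the integration threshold being the device that makes the reduction work. Recall that for $\kappa\in\mesp{\esigma}$ and a monotonic $\omega$-capacity $\nu$,
\[
\choqint{\rho'}\kappa(\rho')\,d\nu=\int_0^{+\infty}\nu\bigl(\{\rho'\in\esigma\mid \kappa(\rho')>t\}\bigr)\,dt .
\]
Applying this with $\nu=\overline{\spesem{input}}(\rho)$ and writing $U_t=\{\rho'\in\esigma\mid \kappa(\rho')>t\}$, I would set $g(\rho,t)=\overline{\spesem{input}}(\rho)(U_t)$, so that $\weak{input}(\kappa)(\rho)=\int_0^{+\infty}g(\rho,t)\,dt$. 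Positivity is then immediate, and everything reduces to measurability of $\rho\mapsto\int_0^{+\infty}g(\rho,t)\,dt$.

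Next I would record the two structural facts about $g$. For each fixed $t>0$, the super-level set $U_t$ is Borel because $\kappa$ is measurable, so $\rho\mapsto g(\rho,t)=\overline{\spesem{input}}(\rho)(U_t)$ is measurable by Assumption~\ref{assum:mesaure}. For each fixed $\rho$, the map $t\mapsto g(\rho,t)$ is nonincreasing: the family $(U_t)_t$ decreases in $t$, and $\overline{\spesem{input}}(\rho)$ is monotonic, which follows directly from its defining formula (if $C\subseteq C'$ then $\{x\mid (x,\rho_0)\in C\}\subseteq\{x\mid (x,\rho_0)\in C'\}$, and $\spesem{input}(\rho)$ is monotonic). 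Note that only monotonicity of the capacity is used here, not continuity.

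The core step is to approximate the $t$-integral by dyadic right Riemann sums
\[
S_n(\rho)=\sum_{k=1}^{+\infty}\frac{1}{2^n}\,g\!\left(\rho,\frac{k}{2^n}\right).
\]
Each summand is, as a function of $\rho$, measurable by Assumption~\ref{assum:mesaure} (the threshold $k/2^n$ is fixed, so $U_{k/2^n}$ is a fixed Borel set), and since $\mesp{\esigma}$ is a cone stable under pointwise limits and countable suprema, the countable sum $S_n$ lies in $\mesp{\esigma}$. Because $t\mapsto g(\rho,t)$ is nonnegative and nonincreasing, the sequence $(S_n)_n$ is nondecreasing — dyadic refinement only adds nonnegative increments of the form $2^{-(n+1)}\bigl(g(\rho,(2k-1)/2^{n+1})-g(\rho,k/2^n)\bigr)\geq 0$ — and $\supi_{n\in\nn}S_n(\rho)=\int_0^{+\infty}g(\rho,t)\,dt$ by the standard identity for improper integrals of monotone functions (the partial right sums converge to the integral from below, the identity remaining valid when the integral is $+\infty$). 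Hence $\weak{input}(\kappa)=\supi_{n\in\nn}S_n$ belongs to $\mesp{\esigma}$, using once more the stability of $\mesp{\esigma}$ under countable suprema.

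The main obstacle is exactly the passage from ``measurable in $\rho$ for each fixed $t$'' to ``the $t$-integral is measurable in $\rho$'': separate measurability alone does not give this, and proving joint measurability of $(\rho,t)\mapsto g(\rho,t)$ directly would be delicate. The crucial observation that unlocks the proof is that $g$ is \emph{monotone} in $t$, which allows the integral to be rewritten as a countable supremum of functions already known to be measurable in $\rho$; the remainder is routine bookkeeping with Assumption~\ref{assum:mesaure} and the closure properties of $\mesp{\esigma}$.
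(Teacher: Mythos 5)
Your proof is correct, and it follows a genuinely different route from the paper's. The paper discretizes the \emph{integrand}: it writes $\kappa$ as a nondecreasing supremum of positive step functions $\varphi_n=\alpha_n\sum_{i=0}^{K_n}\chi_{A_i^n}$ (with $(A_i^n)_i$ a nonincreasing sequence of Borel sets), invokes the $\omega$-Scott continuity of the Choquet integral to exchange integral and supremum, evaluates $\choqint{\rho'\in\esigma}\varphi_n(\rho')\,d\overline{\spesem{input}}(\rho)$ as the finite sum $\alpha_n\sum_{i=0}^{K_n}\overline{\spesem{input}}(\rho)(A_i^n)$, and concludes by Assumption~\ref{assum:mesaure} together with stability of $\mesp{\esigma}$ under countable suprema. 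You discretize the \emph{threshold} instead: starting from the layer-cake form of the Choquet integral, you approximate the $t$-integral by the dyadic right Riemann sums $S_n$, using monotonicity of the capacity---hence of $t\mapsto g(\rho,t)$---twice, once to make $(S_n)_n$ nondecreasing under dyadic refinement and once to identify $\supi_{n\in\nn}S_n(\rho)$ with the improper integral (your increment computation is right, and the convergence claim holds also when the integral is $+\infty$, via the sandwich $\int_{2^{-n}}^{+\infty}g(\rho,t)\,dt\leq S_n(\rho)\leq\int_0^{+\infty}g(\rho,t)\,dt$). Both arguments land in the same place, a countable supremum of functions each measurable in $\rho$ by Assumption~\ref{assum:mesaure}, but they consume different hypotheses: the paper's exchange of supremum and integral is exactly where the assumed \emph{continuity} of the $\omega$-capacity enters (and, strictly speaking, one must also check that this continuity survives the extension from $\spesem{input}(\rho)$ to $\overline{\spesem{input}}(\rho)$, a point the paper glosses over), whereas your argument never uses continuity, only monotonicity, so it establishes the proposition for arbitrary monotone capacities satisfying Assumption~\ref{assum:mesaure}---slightly more general, at the price of the Riemann-sum bookkeeping, which you handle correctly.
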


\begin{proof}
The positivity is clear from the definition of the Choquet integral. We only give a proof 
for the measurability. Let $\kappa$ be a positive measurable function.  Then $\kappa$ is 
the nondecreasing supremum of a sequence of positive step functions $\left(\varphi_{n}\right)_{n\in\nn}$ and:
\[
\choqint{x\in X} \kappa(x)d\overline{ \spesem{input}}(\rho)=\choqint{x\in X}
\supi_{n\in\nn} \varphi_{n}(x)d \overline{\spesem{input}}(\rho)\enspace .
\]
From the $\omega$-Scott continuity of Choquet integrals, we get:
\[
\choqint{\rho'\in \esigma} \kappa(\rho')d \overline{\spesem{input}}(\rho)=\supi_{n\in\nn}\choqint{\rho'\in \esigma} \varphi_{n}(\rho')d \overline{\spesem{input}}(\rho)\enspace .
\]
The function $\varphi_{n}$ have the form $\alpha_{n}\sum_{i=0}^{K_{n}} \chi_{A_{i}^{n}}$ with 
$(A_{i}^{n})_{i}$ is a nonincreasing sequence of Borel sets for all $n\in\nn$. 
Thus, we have:
\[
\choqint{\rho'\in X} \varphi_{n}(\rho')d \overline{\spesem{input}}(\rho)=\alpha_{n}\sum_{i=0}^{K_{n}} \overline{\spesem{input}}(\rho)(A_{i}^{n})\enspace.
\]

Hence, from the Assumption~\ref{assum:mesaure}, the map $\rho\mapsto\overline{\spesem{input}}(\rho)(A_{i}^{n})$ is measurable for all $n\in\nn$, for all 
$i\in\{0,\ldots,K_{n}\}$ and then for all $n\in\nn$, $\rho\mapsto \alpha_{n}\sum_{i=0}^{K_{n}} \overline{\spesem{input}}(\rho)(A_{i}^{n})$ is measurable. 
We conclude that the map: $\displaystyle{\rho\mapsto \choqint{\rho'\in \esigma} \kappa(\rho')d\overline{\spesem{input}}(\rho)}$
is measurable since it is the pointwise supremum of measurable functions.
\end{proof}

\begin{proposition}
If the $\omega$-capacity $\spesem{input}(\rho)$ is convex (concave) and $\omega$-continuous 
then $\kappa\mapsto\weak{input}(\kappa)(\rho)$ defines a upper (lower) $\omega$-continuous 
prevision.  
\end{proposition}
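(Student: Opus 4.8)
The plan is to freeze the environment $\rho$ and study the functional $F \colon \mesp{\esigma}\to[0,+\infty]$ given by $F(\kappa)=\weak{input}(\kappa)(\rho)=\choqint{\rho'}\kappa(\rho')\,d\overline{\spesem{input}}(\rho)$. Once $\rho$ is fixed this is a single Choquet integral against the fixed $\omega$-capacity $\overline{\nu}:=\overline{\spesem{input}}(\rho)$, so the goal reduces to showing that a Choquet-integral functional $\kappa\mapsto\choqint{}\kappa\,d\overline{\nu}$ is a classical prevision of the stated kind. I would check the defining clauses in turn: positive homogeneity, monotonicity, $\omega$-continuity, and the additivity inequality that distinguishes upper from lower previsions.

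The preliminary step is to see that $\overline{\nu}$ inherits the structure of $\nu:=\spesem{input}(\rho)$. Writing $\iota\colon\re^{\Va_{I}}\to\esigma$ for the map $x\mapsto(x,\rho_0)$, the definition of the extension reads $\overline{\nu}(C)=\nu(\iota^{-1}(C))$, so $\overline{\nu}$ is the pushforward of $\nu$ along $\iota$. As $\iota^{-1}$ commutes with unions, intersections and directed unions, the monotonicity and $\omega$-continuity assumed of $\nu$, as well as its convexity (resp. concavity), pass verbatim to $\overline{\nu}$; for the convex case one simply applies the inequality for $\nu$ to the sets $\iota^{-1}(C)$ and $\iota^{-1}(D)$ appearing in $\overline{\nu}(C\cup D)+\overline{\nu}(C\cap D)$. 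Positive homogeneity and monotonicity of $F$ are then immediate from the corresponding properties of the Choquet integral, the latter using monotonicity of $\overline{\nu}$.

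For $\omega$-continuity I would reuse the mechanism of the preceding proof: every $\kappa\in\mesp{\esigma}$ is the ascending supremum of positive step functions, and the $\omega$-Scott continuity of the Choquet integral—available because $\overline{\nu}$ is $\omega$-continuous—yields $F(\supi_{n}\kappa_n)=\supi_{n}F(\kappa_n)$ for every ascending sequence $(\kappa_n)_n$. The decisive clause is the additivity inequality, and it is exactly Proposition~\ref{additive} applied to $\overline{\nu}$: convexity of $\overline{\nu}$ supplies the inequality defining an upper prevision and concavity the one defining a lower prevision, and specialising $\alpha=\beta=1$ gives the clause in the form required by the definition of previsions. I expect the main obstacle to be precisely the transfer step of the second paragraph—verifying that the slicing extension preserves convexity, concavity and $\omega$-continuity—because once $\overline{\nu}$ is known to be a monotonic, $\omega$-continuous, convex (resp. concave) capacity, Proposition~\ref{additive} together with the elementary Choquet-integral identities closes the proof at once.
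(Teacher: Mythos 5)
Your proposal is correct and takes essentially the same route as the paper: the paper leaves the proof to the reader with exactly this sketch (apply Proposition~\ref{additive}, then check that the extended capacity $\overline{\spesem{input}}(\rho)$ inherits convexity/concavity), and your preimage argument $\overline{\nu}(C)=\nu(\iota^{-1}(C))$ with $\iota\colon x\mapsto(x,\rho_0)$, using that $\iota^{-1}$ commutes with $\cup$, $\cap$ and directed unions, correctly carries out the transfer step the paper omits, together with the routine homogeneity, monotonicity and step-function continuity checks. One caveat: by the paper's own definition of previsions, the superlinearity that Proposition~\ref{additive} derives from \emph{convexity} is the \emph{lower} clause $F(f+g)\geq F(f)+F(g)$, so the pairing ``convex $\Rightarrow$ upper'' that you repeat from the statement is swapped relative to the paper's earlier definitions---an inconsistency internal to the paper that your write-up inherits rather than flags.
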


The proof of this latter proposition is left to the reader. Indeed, from Proposition~\ref{additive},
if the capacity is convex (concave) then the Choquet integral is superlinear (sublinear). The proof 
is thus reduced to show that if the capacity $\spesem{input}(\rho)$ is convex or concave then the 
extended capacity $\overline{\spesem{input}}(\rho)$ fulfills the same property.

\bibliography{cpp}

\begin{thebibliography}{BGGP11}

\bibitem[BGGP11]{BGGLP-scan10}
Olivier Bouissou, {\'E}ric Goubault, Jean Goubault{-}Larrecq, and Sylvie Putot.
\newblock A generalization of {P}-boxes to affine arithmetic, and applications
  to static analysis of programs.
\newblock In {\em {P}roceedings of the 14th {GAMM}-{IMACS} {I}nternational
  {S}ymposium on {S}cientific {C}omputing, {C}omputer {A}rithmetic and
  {V}alidated {N}umerics ({SCAN}'10)}, Lyon, France, September 2011.
\newblock To appear.

\bibitem[GL07]{JGL:icalp07:cred}
Jean Goubault-Larrecq.
\newblock Continuous capacities on continuous state spaces.
\newblock In {\em ICALP'2007}. Springer-Verlag LNCS, 2007.

\bibitem[Gou07]{Gou-csl07}
Jean Goubault{-}Larrecq.
\newblock Continuous previsions.
\newblock In Jacques Duparc and Thomas~A. Henzinger, editors, {\em
  {P}roceedings of the 16th {A}nnual {EACSL} {C}onference on {C}omputer
  {S}cience {L}ogic ({CSL}'07)}, pages 542--557, Lausanne, Switzerland,
  September 2007. Springer-Verlag LNCS 4646.

\bibitem[Koz81]{Kozen:probabilistic}
Dexter Kozen.
\newblock Semantics of probabilistic programs.
\newblock {\em Journal of Computer and System Sciences}, 22:328--350, 1981.

\bibitem[Law97]{Lawson:Max}
Jimmie Lawson.
\newblock Spaces of maximal points.
\newblock {\em Mathematical Structures in Computer Science}, 7:543--555, 1997.

\bibitem[Mon08]{Monniaux:pitfalls}
David Monniaux.
\newblock The pitfalls of verifying floating-point computations.
\newblock {\em Transactions on Programming Languages and Systems}, 30(3), 2008.
\newblock Article 12.

\end{thebibliography}
\end{document}